%% Template Paper ISIT 2013
%%
%% October 2012, Stefan M. Moser
%% based on various earlier templates
%%
%% Please note that your paper must be no more than five pages in
%% the IEEEtran conference style as presented here (including figures,
%% references, etc.!)

\documentclass[conference,a4paper]{IEEEtran}

\usepackage{amsmath}
\usepackage{amssymb}
\usepackage{amsthm}
\usepackage[table]{xcolor}

%% Conference papers do not typically use \thanks and this command
%% is locked out in conference mode. If really needed, such as for
%% the acknowledgment of grants, uncomment the following:
\IEEEoverridecommandlockouts

\newtheorem{definition}{Definition}

\newtheorem{example}[definition]{Example}

\newtheorem{lemma}[definition]{Lemma}
\newtheorem{theorem}[definition]{Theorem}
\newtheorem{corollary}[definition]{Corollary}
\newtheorem{remark}[definition]{Remark}%[section]

\newcommand{\rk}{\mathrm{rank}}
\newcommand{\Gr}{\mathcal{G}_q(k,n)}

\newcommand{\field}[1]{\mathbb{#1}}
\newcommand{\C}{\field{C}}
\newcommand{\F}{\field{F}}

\newcommand{\cF}{{\cal F}}

\newcommand{\cA}{{\cal A}}

\newcommand{\cC}{{\cal C}}
\newcommand{\cG}{{\cal G}}

\newcommand{\deff}{\mbox{$\stackrel{\rm def}{=}$}}

\newcommand{\Gauss}[2]{%\begin{footnotesize}
\left[\begin{array}
{c}#1\\#2\end{array}\right]_{q}%\end{footnotesize}
}

\begin{document}

\sloppy

%% Paper Title
%% You can use linebreaks \\ within to get better formatting as
%% desired.
\title{ New Lower Bounds for Constant Dimension Codes}

%% Author names and affiliations:
%%
%% Avoiding spaces at the end of the author lines is not a problem with
%% conference papers because we don't use \thanks or \IEEEmembership.
%%
%% For several authors with only one affiliation:
%%
% \author{
%   \IEEEauthorblockN{Hui-Ting Chang and Stefan M.~Moser}
%   \IEEEauthorblockA{Department of Electrical and Computer Engineering\\
%     National Chiao Tung University (NCTU)\\
%     Hsinchu, Taiwan\\
%     Email: \{email-of-hui-ting,email-of-stefan\}@ieee.org}
% }
%%
%% For up to three affiliations:
%%
\author{
  \IEEEauthorblockN{Natalia Silberstein}
  \IEEEauthorblockA{Dep. of Electrical \& Computer Eng.\\
  University of Texas at Austin\\
   Austin, TX, USA\\
   Email: natalys@austin.utexas.edu
    }
  \and
  \IEEEauthorblockN{Anna-Lena Trautmann}
  \IEEEauthorblockA{Inst. of Mathematics\\
    University of Zurich\\
    Zurich, Switzerland\\
    Email: anna-lena.trautmann@math.uzh.ch}
}
%%
%% For over three affiliations, or if they all won't fit within the width
%% of the page, use this alternative format:
%%
% \author{
%   \IEEEauthorblockN{
%     Michael Shell\IEEEauthorrefmark{1},
%     Homer Simpson\IEEEauthorrefmark{2},
%     James Kirk\IEEEauthorrefmark{3},
%     Montgomery Scott\IEEEauthorrefmark{3} and
%     Eldon Tyrell\IEEEauthorrefmark{4}}
%   \IEEEauthorblockA{
%     \IEEEauthorrefmark{1}School of Electrical and Computer Engineering\\
%     Georgia Institute of Technology, Atlanta, Georgia 30332--0250\\
%     Email: see http://www.michaelshell.org/contact.html}
%   \IEEEauthorblockA{
%     \IEEEauthorrefmark{2}Twentieth Century Fox, Springfield, USA\\
%     Email: homer@thesimpsons.com}
%   \IEEEauthorblockA{
%     \IEEEauthorrefmark{3}Starfleet Academy, San Francisco, California 96678-2391\\
%     Telephone: (800) 555--1212, Fax: (888) 555--1212}
%   \IEEEauthorblockA{
%     \IEEEauthorrefmark{4}Tyrell Inc., 123 Replicant Street, Los Angeles, California 90210--4321}
% }

%% Use for special paper notices
%\IEEEspecialpapernotice{(Invited Paper)}

%% To balance the two columns, you should reduce the text-height of
%% the last page using the following command:
%%%%%%%%%%%%%%%%%%%%%%%%%%%%%%%%%%%%%%%%%%%%%%%%%%%%%%%%%%%%%%%%%%%%%
%\addtolength{\textheight}{-9.35cm}
%%%%%%%%%%%%%%%%%%%%%%%%%%%%%%%%%%%%%%%%%%%%%%%%%%%%%%%%%%%%%%%%%%%%%
%% with an appropriate value. This command must be place on the second
%% last page, i.e., for a one-page abstract here, for a two-page
%% abstract right after the \maketitle command.

%% Create the title:
\maketitle

%% Abstract:
%% For the final version of the accepted paper, please make sure you
%% remove the comment "THIS PAPER IS ELIGIBLE FOR THE STUDENT PAPER
%% AWARD."
%%
\begin{abstract}
  This paper provides new constructive lower bounds for constant dimension codes, using different techniques such as Ferrers diagram rank metric codes and pending  blocks. Constructions for two families of parameters of constant dimension codes are presented. The examples of codes obtained by these constructions are the  largest known constant dimension codes for the given parameters.
\end{abstract}

%%%%%%%%%%%%%%%%%%%%%%%%%%%%%%%%%%%%%%%%%%%%%%%%%%%%%%%%%%%%%%%%%%%%%%%%%%%%%%%%%%%%%%%%%%%%%%%%%%%%%%%%%%%%%%%%%%%%%%%%%%%%%%%%%%%%%%%%%%%%%%%
\section{Introduction}
\label{sec:intro}

Let $\F_q$ be the finite field of size $q$. Given two integers $k,n$, such that $0\leq k\leq n$, the set of all $k$-dimensional subspaces of $\F_q^n$ forms the Grassmannian over $\F_q$, denoted by $\Gr$. It is well known that the cardinality of the Grassmannian is given by the $q$-\emph{ary Gaussian coefficient}
\[\Gauss{n}{k} \deff\ |\Gr|=\prod_{i=0}^{k-1}\frac{q^{n-i}-1}{q^{k-i}-1}.
\]
The Grassmannian space is
a metric space, where the \emph{subspace distance} between any two
subspaces $X$ and $Y$ in $\Gr$, is given by
\begin{equation}
\label{def_subspace_distance}
d_S (X,\!Y) \,\ \deff\ \dim X + \dim Y -2 \dim\bigl( X\, {\cap}Y\bigr).
\end{equation}

We say that $\C\subseteq \Gr$ is an $(n,M,d,k)_q$ \emph{code in
the Grassmannian}, or \emph{constant-dimension code}, if $M =
|\C|$ and $d_S (X,\!Y) \ge d$ for all distinct elements $X,\!Y \in
\C$. Note, that the minimum distance $d$ of $\C$ is always even.
$A_q(n,d,k)$ will denote the maximum size of an $(n,M,d,k)_q$ code.

Constant dimension codes  have drawn a significant attention in the last five years due to
the work by Koetter and Kschischang~\cite{KK}, where they
presented an application of such codes for error-correction in
random network coding. Constructions and bounds for constant dimension codes were
given in~\cite{BoGa09,EtSi09,EtSi12,EV08,ga,GaYa10,KoKu08,MGR08,Ska10,TrRo10}.
%In~\cite{KK} were presented the Reed-Solomon like codes. In~\cite{SKK}
%it was shown that these codes can be described in
%terms of rank-metric codes~\cite{Gab85,Rot91}. The generalizations of this construction were presented in~\cite{BoGa09,EtSi09,MGR08,Ska10,TrRo10}.

In this paper we
%continue the line of research proposed in \cite{EtSi09,TrRo10,EtSi12} and
focus on constructions of large constant dimension codes.  In particular, we generalize the idea of construction of codes in the Grassmannian from~\cite{EtSi09,EtSi12,TrRo10} and obtain  new lower bounds on $A_q(n,d,k)$. In Section~\ref{sec:preliminaries} we introduce the necessary definitions and present two known constructions which will be the starting point to our new constructions. In Section~\ref{sec:pendingBlocks} we introduce the notation of pending blocks. In Sections~\ref{sec:constructionsk=4} and~\ref{sec:recursion} we present our new constructions. It appears that the codes obtained by these constructions are the largest known constant dimension codes for the given parameters.
%
 %Due to space limitations we refrain from extensive proofs and examples in this work. They can be found in the extended version of this paper~\cite{SiTr13}.

%%%%%%%%%%%%%%%%%%%%%%%%%%%%%%%%%%%%%%%%%%%%%%%%%%%%%%%%%%%%%%%%%%%%%%%%%%%%%%%%%%%%%%%%%%%%%%%%%%%%%%%%%%%%%%%%%%%%%%%%%%%%%%%%%%%%%%%%%55
\section{Preliminaries}
\label{sec:preliminaries}
 In this section we briefly provide the definitions and previous results used in our constructions. More details can be found in~\cite{EtSi09,EtSi12,TrRo10}.

Let $X$ be a $k$-dimensional subspace of $\F_q^n$. We represent $X$ by the matrix $\mbox{RE}(X)$ in reduced
row echelon form, such that the rows of $\mbox{RE}(X)$ form the basis of $X$. The \emph{identifying
vector} of $X$, denoted by $v(X)$ is the binary vector
of length $n$ and weight $k$, where the  $k$ \emph{ones} of $v(X)$ are exactly in the
positions where $\mbox{RE}(X)$ has the leading coefficients (the
pivots).
%All the  binary vectors of length $n$ and weight $k$
%can be considered as  the identifying vectors of all the subspaces
%in $\Gr$.
%These $\binom{n}{k}$ vectors partition  $\Gr$
%into the $\binom{n}{k}$ different classes, where each class
%consists of all subspaces in $\Gr$ with the same identifying
%vector.

The {\it Ferrers tableaux form} of a subspace $X$, denoted by
$\cF(X)$, is obtained from $\mbox{RE}(X)$ first by removing from
each row of $\mbox{RE}(X)$ the {\it zeroes} to the left of the
leading coefficient; and after that removing the columns which
contain the leading coefficients. All the remaining entries are
shifted to the right. The \emph{Ferrers diagram} of $X$, denoted
by $\cF_X$, is obtained from $\cF(X)$ by replacing the entries of
$\cF(X)$ with dots.

%\begin{remark}
 Given $\cF(X)$, the unique corresponding
subspace $X\in \Gr$ can be easily found. Also given $v(X)$, the unique corresponding $\cF_X$ can be found. When we fill the dots of a Ferrers diagram by elements of $\F_q$ we obtain a $\cF(X)$ for some $X\in \Gr$.
%\end{remark}

\begin{example}  Let $X$ be the subspace in $\mathcal G_2(3,7)$ with
the  following  generator matrix in reduced row echelon form:

\begin{footnotesize}
$$\rm{RE}(X)=\left( \begin{array}{ccccccc}
\bf{1} & \color{red}0 & 0 & 0 & \color{red} 1  & \color{red} 1& \color{red} 0\\
0 & 0 & \bf{1} & 0 & \color{red} 1 & \color{red}0 & \color{red}1 \\
0 & 0 & 0 &  \bf{1} &\color{red} 0& \color{red} 1 & \color{red} 1
\end{array}
\right) ~.$$
\end{footnotesize}
Its identifying vector is $v(X)=1011000$, and its Ferrers tableaux
form and Ferrers diagram are given by

\begin{footnotesize}
$$\begin{array}{cccc}
0 & 1 & 1 & 0 \\
&1 & 0 & 1  \\
&0 & 1 & 1
\end{array},~~~
\;
\begin{array}{cccc}
 \bullet & \bullet & \bullet & \bullet \\
  & \bullet & \bullet & \bullet   \\
  & \bullet & \bullet & \bullet  \\
\end{array},$$
\end{footnotesize}
respectively.
\end{example}

In the following we will consider Ferrers diagrams rank-metric codes which are closely related to constant dimension codes.
For two $m \times \ell$ matrices $A$ and $B$ over $\F_q$ the {\it
rank distance}, $d_R(A,B)$, is defined by
$
d_R (A,B) ~ \deff ~ \text{rank}(A-B)~.
$

Let $\cF$ be a Ferrers diagram with $m$ dots in the rightmost
column and $\ell$ dots in the top row. A code $\cC_{\cF}$ is an
$[\cF,\rho,\delta]$ {\it Ferrers diagram rank-metric (FDRM) code}  if
all codewords of $\cC_{\cF}$ are $m\times \ell$ matrices in which
all entries not in $\cF$ are {\it zeroes}, they form a linear subspace of dimension $\rho$ of
$\F_q^{m \times \ell}$, and for any two distinct codewords $A$
and $B$, $d_R (A,B) \geq \delta$.
If $\cF$ is a rectangular $m\times \ell$ diagram with $m\cdot \ell$ dots then the FDRM code is a classical rank-metric code~\cite{Gab85,Rot91}.
The following theorem provides an upper bound on the cardinality of $\cC_{\cF}$.

\begin{theorem}~\cite{EtSi09}\label{thm1}
Let $\cF$ be a Ferrers diagram and $\cC_{\cF}$ the corresponding FDRM code.
Then
$|\cC_{\cF}| \leq q^{\min_i\{w_i\}}$,
where $w_i$ is the number of dots in $\cF$ which are not contained in the first
$i$ rows and the rightmost $\delta-1-i$ columns ($0\leq i\leq \delta-1$).
\end{theorem}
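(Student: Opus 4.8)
The plan is to adapt the standard Singleton-type argument for rank-metric codes to the Ferrers-diagram setting by exhibiting, for each admissible $i$, an injective $\F_q$-linear projection of $\cC_{\cF}$ into a coordinate space of dimension $w_i$. Fixing $i$ with $0\le i\le\delta-1$, I would partition the dots of $\cF$ into two groups: the set $S_i$ of dots lying in one of the first $i$ rows or in one of the rightmost $\delta-1-i$ columns, and its complement, namely the $w_i$ dots that lie neither in the first $i$ rows nor in the rightmost $\delta-1-i$ columns. Let $\pi_i\colon\cC_{\cF}\to\F_q^{w_i}$ be the linear map that reads off the entries of a codeword in these latter $w_i$ positions.

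The core step is to prove that $\pi_i$ is injective. Suppose $\pi_i(A)=\pi_i(B)$ for codewords $A,B$. Then the difference $A-B$, which again lies in $\cC_{\cF}$ by linearity, vanishes on all $w_i$ retained positions and (being a codeword) vanishes off $\cF$ as well; hence every nonzero entry of $A-B$ sits in one of the first $i$ rows or in one of the rightmost $\delta-1-i$ columns. I would then invoke the elementary linear-algebra fact that a matrix whose support is contained in the union of a fixed set of $i$ rows and a fixed set of $c$ columns has rank at most $i+c$: after discarding the $i$ chosen rows, every remaining row is supported on the $c$ chosen columns, so the whole row space is spanned by $i+c$ vectors. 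With $c=\delta-1-i$ this gives $\rk(A-B)\le i+(\delta-1-i)=\delta-1$, that is $d_R(A,B)\le\delta-1$. Since the minimum rank distance of $\cC_{\cF}$ is $\delta$, this forces $A-B=0$, proving injectivity.

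Injectivity immediately yields $|\cC_{\cF}|=|\pi_i(\cC_{\cF})|\le q^{w_i}$, because the image is a subspace of $\F_q^{w_i}$ and therefore has at most $q^{w_i}$ elements. As $i$ ranges over $0\le i\le\delta-1$ the code $\cC_{\cF}$ itself does not change, so all of these inequalities hold simultaneously and we may take the smallest, giving $|\cC_{\cF}|\le q^{\min_i\{w_i\}}$. I expect the only genuinely delicate point to be the support-to-rank estimate; everything else is bookkeeping about which dots are removed and a count of the surviving dots. It is worth noting that specializing $\cF$ to a full $m\times\ell$ rectangle recovers the classical Singleton bound for rank-metric codes, which serves as a useful sanity check both on the choice of projection and on the combinatorial count defining $w_i$.
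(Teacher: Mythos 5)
Your argument is correct: the projection onto the $w_i$ retained positions is injective because two codewords agreeing there would differ by a codeword supported on the union of $i$ rows and $\delta-1-i$ columns, hence of rank at most $\delta-1<\delta$, and this gives $|\cC_{\cF}|\leq q^{w_i}$ for every admissible $i$ simultaneously. The paper only quotes this bound from~\cite{EtSi09} without reproducing a proof, and your puncturing/Singleton-type argument is precisely the one given in that reference, so the two approaches coincide.
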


A code which attains the bound of Theorem~\ref{thm1} is
called a \emph{Ferrers diagram maximum rank distance (FDMRD) code}.

\begin{remark}Maximum rank distance (MRD) codes are a class of $[\cF,\ell(m-\delta+1),\delta]$ FDRM codes, $\ell\geq m$, with a full $m\times \ell$ diagram $\cF$, which attain the bound of Theorem~\ref{thm1}~\cite{Gab85,Rot91}.
\end{remark}

It was proved in~\cite{EtSi09} that for general diagrams the bound of Theorem~\ref{thm1} is always attained for $\delta=1,2$.
Some special cases, when this bound is attained for $\delta>2$, can be found in~\cite{EtSi09}.

For a codeword $A \in \cC_{\cF} \subset \F_q^{k \times (n-k)}$
let $A_{\cF}$ denote the part of
$A$ related to the entries of $\cF$ in $A$. Given a
 FDMRD code $\cC_{\cF}$, a lifted FDMRD code
$\C_{\cF}$ is defined as follows:
$$\C_{\cF} = \{X\in \Gr :
\cF(X)=A_{\cF},~ A \in \cC_{\cF} \}.
$$

This definition is the generalization of the definition of a lifted
MRD code~\cite{SKK}. Note, that all the codewords of a lifted MRD code have the same identifying vector of the type $(\underset{k}{\underbrace{11...1}}\underset{n-k}{\underbrace{000...00}})$.
The following lemma~\cite{EtSi09} is the generalization of the result
given in~\cite{SKK}.

\begin{lemma}
\label{lem:dist_lift} If $\cC_{\cF} \subset \F_q^{k \times (n-k)}$
is an $[ \cF , \rho , \delta ]$ Ferrers diagram
rank-metric code, then its lifted code $\C_{\cF}$ is an $(n, q^\rho ,
2\delta , k)_q$ constant dimension code.
\end{lemma}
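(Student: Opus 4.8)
The plan is to prove Lemma~\ref{lem:dist_lift} by establishing two things about the lifted code $\C_{\cF}$: first, that its cardinality equals $q^\rho$, and second, that the minimum subspace distance is at least $2\delta$. The cardinality claim is essentially immediate from the definition: the map $A \mapsto X$ sending a codeword $A \in \cC_{\cF}$ to the subspace $X$ with $\cF(X) = A_{\cF}$ is a bijection, since a Ferrers tableaux form uniquely determines a subspace and vice versa (as noted in the preliminaries). Because $\cC_{\cF}$ is a linear subspace of dimension $\rho$ over $\F_q$, it has exactly $q^\rho$ elements, hence $|\C_{\cF}| = q^\rho$. It also follows that every $X \in \C_{\cF}$ has the same identifying vector, the one determined by $\cF$, so all codewords have dimension $k$ and live in $\Gr$.

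The core of the argument is the distance bound, and the key is to relate the subspace distance between two lifted codewords to the rank distance between the underlying matrices. First I would observe that because all codewords of $\C_{\cF}$ share the same identifying vector, each lifted subspace $X$ has a reduced row echelon matrix $\mbox{RE}(X)$ whose pivot columns are fixed and whose non-pivot entries are exactly the entries of the corresponding $A_{\cF}$. Thus a codeword $X$ can be written, after a fixed column permutation bringing the pivots to the front, in the form $(I_k \mid A)$ where $A$ is the $k \times (n-k)$ matrix supported on $\cF$. For two codewords $X, Y$ with matrices $(I_k \mid A)$ and $(I_k \mid B)$, I would compute $\dim(X \cap Y)$ via the dimension formula $\dim(X \cap Y) = \dim X + \dim Y - \dim(X + Y)$, where $X + Y$ is the row space of the stacked $2k \times n$ matrix. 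Row-reducing that stacked matrix, the top block $(I_k \mid A)$ eliminates the identity part of the bottom block and leaves $(0 \mid B - A)$, so the row space of the combined matrix has dimension $k + \rk(B - A) = k + d_R(A,B)$.

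Substituting back, I get $\dim(X \cap Y) = 2k - (k + d_R(A,B)) = k - d_R(A,B)$, and therefore
\begin{equation*}
d_S(X,Y) = 2k - 2\dim(X \cap Y) = 2k - 2\bigl(k - d_R(A,B)\bigr) = 2\, d_R(A,B).
\end{equation*}
Since $\cC_{\cF}$ has minimum rank distance at least $\delta$, any two distinct codewords $A \neq B$ satisfy $d_R(A,B) \geq \delta$, so $d_S(X,Y) \geq 2\delta$, which is exactly the required minimum distance. This confirms that $\C_{\cF}$ is an $(n, q^\rho, 2\delta, k)_q$ constant dimension code.

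The main obstacle I anticipate is making the reduction to the $(I_k \mid A)$ form fully rigorous, since in general the columns of $\cF$ are not the final $n-k$ columns of $\mbox{RE}(X)$: the pivots sit in the positions marked by the identifying vector, and the Ferrers diagram occupies the staircase-shaped region to the right of and below the pivots, not a clean rectangular block. One must verify that a single fixed permutation of the $n$ coordinates simultaneously sends every codeword's $\mbox{RE}(X)$ to standard $(I_k \mid A')$ form in a way that is a subspace isometry for the subspace distance (column permutations preserve dimensions of subspaces, sums, and intersections, hence preserve $d_S$) and that the rank of the relevant matrix difference is unaffected by the extra zero columns introduced where the original pivot-adjacent zeros lay. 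Once one checks that the entries outside $\cF$ are forced to be zero and identical across all codewords, so they contribute nothing to the rank of $B - A$, the rank computation $\rk(B-A) = d_R(A,B)$ goes through and the distance identity follows cleanly.
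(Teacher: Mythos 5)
Your proof is correct and follows essentially the route the paper takes: the paper does not reprove this lemma (it cites it from [EtSi09]) but rests it on the stated identity $d_S(X,Y)=2d_R(\mathrm{RE}(X),\mathrm{RE}(Y))$ when $v(X)=v(Y)$, which is exactly what you re-derive via the stacked-matrix rank computation — the same computation the paper itself carries out in the proof of Theorem~\ref{thm4}. Your handling of the cardinality and of the column-permutation/forced-zero subtlety is sound, so the argument goes through as written.
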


%%%%%%%%%%%%%%%%%%%%%%%%%%%%%%%%%%%%%%%%%%%%%%%%%%%%%%%%%%%
\subsection {The multilevel construction and pending dots construction}

It was proved in~\cite{EtSi09} that for
$X,Y\in\Gr$ we have $d_S(X,Y)\geq d_H(v(X),v(Y))$, where $d_H$
denotes the Hamming distance; and  if $v(X)=v(Y)$ then
%\begin{equation}
%\label{eq:equalID}
$d_S(X,Y)=2d_R(\mbox{RE}(X),\mbox{RE}(Y))$.
%\end{equation}
 The multilevel construction~\cite{EtSi09} of constant dimension code is based on these properties of $d_S$.

\textbf{ Multilevel construction.}
 First,  a binary constant weight code of
length $n$, weight $k$, and Hamming distance $2 \delta$ is
chosen to be the set of the identifying vectors
for~$\C$. Then, for each identifying vector a corresponding lifted
FDMRD code with minimum rank distance $\delta$
is constructed. The union of these lifted FDMRD
codes is an $(n, M, 2\delta, k)_q$ code.

In the construction provided in~\cite{EtSi12}, for $k=3$ and $\delta=2$, in the stage of choosing identifying vectors for a code $\C$, the vectors of (Hamming) distance $2\delta-2=2$ are allowed, by using a method based on pending dots in a Ferrers
diagram~\cite{TrRo10}.

%Since our constructions in this paper are the generalizations of the construction based on pending dots we describe it in more details.

The \emph{pending dots} of a Ferrers diagram $\cF$ are the leftmost dots in the
first row of $\cF$ whose removal has no impact on the size of the
corresponding Ferrers diagram rank-metric code. The following lemma follows from~\cite{TrRo10}.

\begin{lemma}\cite{TrRo10}\label{lm:pending dots}
Let $X$ and $Y$  be two subspaces in $\Gr$ with
$d_H(v(X),v(Y))=2\delta-2$, such that the leftmost \emph{one} of
$v(X)$ is in the same position as the leftmost  \emph{one} of
$v(Y)$.  Let $P_X$ and $P_Y$ be the sets of pending dots of $X$ and $Y$, respectively.
If $P_X\cap P_Y\neq \varnothing$
and the entries in $P_X\cap P_Y$ (of their Ferrers tableaux forms) are
assigned with different values in at least one position, then
$d_S(X,Y)\geq 2\delta.$
\end{lemma}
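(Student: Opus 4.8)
The plan is to reduce the subspace-distance bound to a combination of the Hamming-distance contribution from the identifying vectors and the rank-distance contribution coming from the pending dots where $v(X)$ and $v(Y)$ agree. First I would invoke the two inequalities recalled just before the statement: for any $X,Y\in\Gr$ we have $d_S(X,Y)\ge d_H(v(X),v(Y))$, and when the identifying vectors coincide the subspace distance is exactly twice the rank distance of the echelon forms. Since here $d_H(v(X),v(Y))=2\delta-2$, the raw Hamming bound only gives $d_S(X,Y)\ge 2\delta-2$, which is two short of the target; the whole point of the lemma is to recover the missing $2$ from the disagreement in the shared pending dots.

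The key step is a refined estimate of $\dim(X\cap Y)$ that accounts for the pending-dot positions. I would argue as follows. Because the leftmost \emph{one} of $v(X)$ and of $v(Y)$ sit in the same position, the first pivot rows of $\mbox{RE}(X)$ and $\mbox{RE}(Y)$ start in the same column, and the pending dots in $P_X\cap P_Y$ correspond to genuinely shared coordinate positions in those two leading rows. Any vector in $X\cap Y$ arises as a common linear combination of the rows of $\mbox{RE}(X)$ and of $\mbox{RE}(Y)$; I would track how the two reduced row echelon forms can be simultaneously reduced (they share the same leading column) and count the dimension of the intersection. The hypothesis that the entries in $P_X\cap P_Y$ differ in at least one position forces the two leading rows to be linearly independent in the relevant coordinates, so that the dimension of the intersection drops by one compared with the generic case consistent with $d_H=2\delta-2$.

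Concretely, the bound $d_H(v(X),v(Y))=2\delta-2$ together with $\dim X=\dim Y=k$ gives, via $d_S=2k-2\dim(X\cap Y)$, that in the worst case $\dim(X\cap Y)$ could be as large as $k-(\delta-1)$, yielding only $d_S\ge 2\delta-2$. I would then show that the disagreement in a shared pending dot rules out this worst case: it certifies that a specific candidate common basis vector (the one using the first pivot row) cannot lie in both $X$ and $Y$ simultaneously, so $\dim(X\cap Y)\le k-\delta$, and hence $d_S(X,Y)=2k-2\dim(X\cap Y)\ge 2\delta$. The main obstacle, and the step demanding the most care, is the bookkeeping that converts ``the shared pending-dot entries differ'' into the strict drop $\dim(X\cap Y)\le k-\delta$: one must verify that the relevant entries really do survive into the intersection computation, i.e. that the columns indexing $P_X\cap P_Y$ are not columns containing pivots of either subspace (which is guaranteed by their being pending \emph{dots}), and that no other row operations can compensate for the differing values. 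I expect the cleanest way to close this gap is to appeal directly to the construction of Lemma~\ref{lm:pending dots} in~\cite{TrRo10}, treating the pending dots as an auxiliary coordinate block on which the two subspaces induce a rank-metric constraint, and then combine that local rank-distance contribution of $2$ with the Hamming contribution of $2\delta-2$ through an additivity argument on disjoint coordinate blocks.
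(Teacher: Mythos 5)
Your overall strategy is the right one, but the step that actually delivers the extra $2$ is not proved, and the way you phrase it is a non sequitur. You correctly reduce the problem to showing $\dim(X\cap Y)\le k-\delta$ (equivalently $\dim(X+Y)\ge k+\delta$), and you correctly note that the Hamming distance alone only gives $\dim(X+Y)\ge k+\delta-1$. But your argument for the final $+1$ is: ``a specific candidate common basis vector (the one using the first pivot row) cannot lie in both $X$ and $Y$ simultaneously, so $\dim(X\cap Y)\le k-\delta$.'' Ruling out one particular vector from $X\cap Y$ does not bound $\dim(X\cap Y)$: the intersection is a subspace and could avoid that one vector while still having dimension $k-\delta+1$. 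What is actually needed is a rank computation on the stacked matrix $\left[\begin{smallmatrix}\mathrm{RE}(X)\\ \mathrm{RE}(Y)\end{smallmatrix}\right]$, whose rank equals $\dim(X+Y)$: subtract the first row of $\mathrm{RE}(Y)$ from the first row of $\mathrm{RE}(X)$ (legitimate because their pivots sit in the same column) to obtain a nonzero row --- nonzero precisely because the entries at some position of $P_X\cap P_Y$ differ, and these positions lie in columns that are pivot columns of neither subspace --- and then verify that this row, the $k$ rows of $\mathrm{RE}(X)$, and the $\delta-1$ rows of $\mathrm{RE}(Y)$ whose pivots sit in new columns are jointly independent. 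That bookkeeping is exactly the content of the lemma, and you explicitly defer it (``appeal directly to the construction \dots in \cite{TrRo10}''), which is circular since \cite{TrRo10} is the source of the very statement being proved.

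For what it is worth, the paper does not reprove this lemma either (it is quoted from \cite{TrRo10}); but its Theorem~\ref{thm4} generalizes it to quasi-pending blocks, and that proof exhibits the intended mechanism: write $d_S(X,Y)=2\,\rk\left[\begin{smallmatrix}\mathrm{RE}(X)\\ \mathrm{RE}(Y)\end{smallmatrix}\right]-2k$, permute columns so that the shared pivot columns come first, then the (quasi-)pending-block columns, then the remaining pivot columns, subtract the lower block from the upper, and read off $\rk\ge k+\tfrac12 d_H(v(X),v(Y))+\rk(B_X-B_Y)$. Specializing to a one-row pending block with $\rk(B_X-B_Y)=1$ gives the lemma. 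If you replace your intersection-dimension heuristic by this explicit rank argument, your proof closes.
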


\begin{example}
Let $X$ and $Y$  be subspaces in $\mathcal{G}_q(3,6)$ which are given by
the following generator matrices:
\begin{footnotesize}
$$\left(\begin{array}{cccccc}
1 &\textcircled{\raisebox{-0.9pt}{0}} &0  & v_1 & v_2 &  0\\
0 & 0 & 1  & v_3 & v_4 &  0 \\
0 & 0 & 0 &  0 &  0 & 1
\end{array}\right),\:
\left(\begin{array}{cccccc}
1 &\textcircled{\raisebox{-0.9pt}{1}} &u_1 & 0 &  u_2  & 0\\
0 & 0 & 0  & 1  & u_3 & 0  \\
0 & 0 & 0 & 0 &  0  & 1
\end{array}\right)
$$
\end{footnotesize}
where $v_i ,u_i\in \F_q$, and the pending dots are emphasized by
circles. Their identifying vectors are $v(X)=101001$ and
$v(Y)=100101$. Clearly, $d_H(v(X), v(Y))=2$, while $d_S(X,Y)\geq 4$.
\end{example}

The following lemma which  follows from a one-factorization and near-one-factorization of  a complete graph~\cite{vLWi92} will be used in our constructions.

\begin{lemma}\label{lm:1-factorization}
Let $D$ be the set of all binary vectors  of length~$m$ and weight
$2$.
\begin{itemize}
\item If $m$ is even, $D$ can be partitioned into $m-1$ classes,
each of $\frac{m}{2}$ vectors with pairwise disjoint positions of \emph{ones};
\item If $m$ is odd, $D$ can be partitioned into $m$ classes,
each of $\frac{m-1}{2}$ vectors with pairwise disjoint positions of \emph{ones}.
\end{itemize}
\end{lemma}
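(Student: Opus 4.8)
The plan is to recast the statement in the language of graphs, where a one-factorization argument becomes transparent. I identify each binary vector of length $m$ and weight $2$ with the edge $\{i,j\}$ of the complete graph $K_m$ on vertex set $\{1,\dots,m\}$ whose two endpoints are the positions of the \emph{ones}. Under this identification, a class of vectors with pairwise disjoint positions of \emph{ones} is precisely a matching of $K_m$, so the two assertions of the lemma become exactly the statements that the edge set of $K_m$ admits a partition into $m-1$ perfect matchings when $m$ is even (a one-factorization) and into $m$ near-perfect matchings when $m$ is odd (a near-one-factorization). A count confirms consistency: $K_m$ has $\frac{m(m-1)}{2}$ edges, and $(m-1)\cdot\frac{m}{2}=m\cdot\frac{m-1}{2}=\frac{m(m-1)}{2}$.

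For the even case I would exhibit the classical round-robin (starter) factorization explicitly. Label the vertices by $\Z_{m-1}\cup\{\infty\}$ and, for each $r\in\Z_{m-1}$, define the class
$$F_r=\bigl\{\{\infty,r\}\bigr\}\cup\Bigl\{\{r+i,\,r-i\}: 1\le i\le \tfrac{m-2}{2}\Bigr\},$$
with all arithmetic carried out modulo $m-1$. Each $F_r$ is a set of $\frac{m}{2}$ edges, and I would check it is a matching by observing that the values $r\pm i$ are pairwise distinct and differ from $r$, so that $r$ appears only in the edge $\{\infty,r\}$. For the odd case I label the vertices by $\Z_m$ and, for each $r\in\Z_m$, define
$$N_r=\Bigl\{\{r+i,\,r-i\}: 1\le i\le \tfrac{m-1}{2}\Bigr\}$$
with arithmetic modulo $m$; here $N_r$ is a near-perfect matching of size $\frac{m-1}{2}$ missing exactly the vertex $r$.

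The main work, and the step I expect to be the crux, is verifying that these families are genuine partitions, i.e.\ that every edge lies in exactly one class. Since the cardinalities already match, it suffices to show each edge is covered at least once. For an edge $\{a,b\}$ with $a,b$ distinct residues, membership in a class reduces to solving $a+b=2r$ for $r$; because $m-1$ (resp.\ $m$) is odd, $2$ is invertible modulo $m-1$ (resp.\ $m$), so $r$ is uniquely determined, and the index $i$ is then fixed up to sign, placing $\{a,b\}$ in a single class. In the even case the edges incident to $\infty$ are handled separately, since vertex $r$ is matched to $\infty$ in $F_r$ and nowhere else. Finally I would translate back through the bijection to recover the partition of $D$ with the stated number of classes and the stated sizes $\frac{m}{2}$ and $\frac{m-1}{2}$.
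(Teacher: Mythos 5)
Your proof is correct and follows exactly the route the paper intends: the paper does not prove this lemma but simply attributes it to the one-factorization and near-one-factorization of the complete graph $K_m$ (citing van Lint and Wilson), which is precisely the graph-theoretic reformulation you set up. You go further by supplying the standard round-robin starter construction and verifying it, which fills in the details the paper delegates to the reference; the verification (invertibility of $2$ modulo the odd modulus, separate handling of the $\infty$ vertex) is sound.
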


The following construction  for $k=3$ and $d=4$ based on pending dots~\cite{EtSi12} will be used as a base step of our recursive construction proposed in the sequel.
%Let $n\geq 8$ and $q^2+q+1\geq n-4$ for odd $n$ ($q^2+q+1\geq n-3$ for even $n$).

\textbf{Construction 0.}
 Let $n\geq 8$ and $q^2+q+1\geq \ell$, where $\ell=n-4$ for odd $n$ and $\ell=n-3$ for even $n$.
 In addition to the lifted MRD code (which has the identifying vector $v_0=(11100\ldots0)$), the final code $\C$ will contain the codewords with identifying vectors of the form $(x||y)$, where the prefix $x\in \F_2^3$ is of weight $1$ and the suffix $y\in \F_2^{n-3}$ is of weight~$2$. By Lemma~\ref{lm:1-factorization}, we partition the set of suffixes into $\ell$  classes $P_1,P_2,\ldots,P_{\ell}$ and define the following three sets:
\[\cA_1=\{(001||y):y\in P_1\},\]
\[\cA_2=\{(010||y):y\in P_i, 2\leq i\leq \min\{q+1,\ell\}\},\]
\[\cA_3= \left\{\begin{array}{cc}
                \{(100||y):y\in P_i,~ q+2\leq i\leq \ell\} & \textmd{if }\ell>q+1 \\
                \varnothing & \textmd{if }\ell\leq q+1 \\
              \end{array}\right..
\]
The idea  is that we use the same prefix for the suffixes of Hamming distance $4$ (from the same class), and when we use the same prefix for two different classes $P_i, P_j$, we use the different values of  Ferrers tableaux forms  in the pending dots.  Then, the corresponding lifted FDMRD codes of distance $4$ are constructed, and their union with the lifted MRD code forms the final code $\C$ of size $q^{2(n-3)}+\Gauss{n-3}{2}$.

In the following sections we will generalize this construction and obtain codes for any $k\geq4$ with $d=4$ and  with $d=2(k-1)$.

%%%%%%%%%%%%%%%%%%%%%%%%%%%%%%%%%%%%%%%%%%%%%%%%%%%%%%%%%%%%%%%%%%%%%%%%%%%%%%%%%%%%%%%%
\section{Pending Blocks}
\label{sec:pendingBlocks}

To present the new constructions for constant dimension codes,
%based on the idea of pending dots construction described in the previous section,
we first need to extend the definition of pending dots of~\cite{TrRo10} to a two-dimensional setting.
% For this we need the following theorem from~\cite{EtSi09}.

\begin{definition}
Let $\cF$ be a Ferrers diagram with $m$ dots in the rightmost column and $\ell$ dots in the top row. We say that the $\ell_1<\ell$ leftmost columns of $\cF$ form a \emph{pending block} (of size $\ell_1$) if the upper bound on the size of FDMRD code $\cC_{\cF}$  from Theorem~\ref{thm1} is equal to the upper bound on the size of $\cC_{\cF}$ without the $\ell_1$ leftmost columns.
\end{definition}

\begin{example}\label{ex:pending_block}
Consider the following Ferrers diagrams:
\begin{footnotesize}
\[ \cF_1=\begin{array}{cccccc} \bullet & \bullet &\bullet&\bullet&\bullet \\
 &\bullet& \bullet & \bullet &\bullet\\
& &\bullet &\bullet&\bullet
\end{array}\quad, \quad
\cF_2=\begin{array}{cccccc}\bullet&\bullet&\bullet \\
  \bullet & \bullet &\bullet\\
\bullet &\bullet&\bullet
\end{array} .\]
\end{footnotesize}

For $\delta=3$ by Theorem~\ref{thm1} both codes $\cC_{\cF_1}$ and $\cC_{\cF_2}$ have $|\cC_{\cF_i}|\leq q^3$, $i=1,2$.
The diagram $\cF_1$ has the pending block $ \begin{array}{cccccc} \bullet & \bullet\\&\bullet
\end{array}$ and the diagram $\cF_2$ has no pending block.
\end{example}

\begin{definition}
Let $\cF$ be a Ferrers diagram with $m$ dots in the rightmost column and $\ell$ dots in the top row, and let $\ell_1< \ell$, and $m_1< m$. If the $(m_1+1)$st row of $\cF$ has less dots than the $m_1$th row of $\cF$, then the $\ell$ leftmost columns of $\cF$ are called a \emph{quasi-pending block} (of size $m_1\times\ell_1$).
\end{definition}

Note, that a pending block is also a quasi-pending block.

%{\color{red} I think we don't need this example:
%\begin{example} Consider the Ferrers diagrams from Example~\ref{ex:pending_block}. The diagram $\cF_1$ has the quasi-pending block $ \begin{array}{cccccc} \bullet & \bullet\\&\bullet
%\end{array}$ and the diagram $\cF_2$ has no quasi-pending block.
%\end{example}
%}

\begin{theorem}\label{thm4}
Let $X,Y\in \Gr$, such that $\rm RE(X)$ and $\rm RE(Y)$ have a quasi-pending block of size $m_{1}\times \ell_{1}$ in the same position and $d_{H}(v(X),v(Y))= d$.   Denote the submatrices of $\cF(X)$ and $\cF(Y)$ corresponding to the quasi-pending blocks by $B_X$ and $B_Y$, respectively.
Then
$d_{S}(X,Y) \geq d+2\rk(B_X- B_Y) $.
%\[\left\{\begin{array}{ll} d_{S}(X,Y) \geq 2d+4 & \textnormal{ if } \mathrm{rank} (B_{X}-B_{Y}) =2
%\\  d_{S}(X,Y) \geq 2d+2 &\textnormal{ if } \mathrm{rank} (B_{X}-B_{Y}) =1 \end{array}\right.\]
\end{theorem}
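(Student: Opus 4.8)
The plan is to convert the subspace distance into a rank computation on the stacked generator matrix and then to split that rank into a contribution coming from the pivots and a contribution coming from the block. Let $M$ be the $2k\times n$ matrix obtained by stacking $\mathrm{RE}(X)$ on top of $\mathrm{RE}(Y)$. Since $\dim(X+Y)=\rk(M)$ and $\dim(X\cap Y)=2k-\rk(M)$, the definition of $d_S$ gives $d_{S}(X,Y)=2\rk(M)-2k$, so it suffices to prove $\rk(M)\ge k+\frac{d}{2}+\rk(B_X-B_Y)$. Write $S$ and $T$ for the pivot positions of $v(X)$ and $v(Y)$; as both vectors have weight $k$ and Hamming distance $d$, we have $|S\cap T|=k-\frac{d}{2}$ and $|S\cup T|=k+\frac{d}{2}$.

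First I would record the fact underlying $d_S\ge d_H$: every column of $S\cup T$ is a pivot column of $M$. Indeed, a row of $\mathrm{RE}(X)$ or $\mathrm{RE}(Y)$ whose leading one sits in column $c$ is identically zero to the left of $c$, so when the columns of $M$ are processed from left to right this row is never altered before column $c$ and supplies a pivot there. This already gives $\rk(M)\ge|S\cup T|=k+\frac{d}{2}$, the baseline bound $d_S\ge d$.

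The heart of the argument is to show the block forces $\rk(B_X-B_Y)$ extra pivots in columns outside $S\cup T$. The key structural input is that, because the quasi-pending block consists of the leftmost columns of the diagram and occupies the same position for $X$ and $Y$, the identifying vectors agree on the whole prefix $C_0=\{1,\dots,g\}$ up to the last block column $g$, while every differing position lies in the suffix $C_1=\{g+1,\dots,n\}$; in particular $s_i=t_i$ for $i\le m_1$, and the $\ell_1$ block columns are non-pivot columns of both subspaces lying in $C_0$. I would then analyse $M|_{C_0}$, the restriction of $M$ to the columns of $C_0$. Any row of $X$ or $Y$ whose pivot lies in $C_1$ is zero throughout $C_0$, so $M|_{C_0}$ is built only from the rows carrying the $p=|(S\cup T)\cap C_0|$ shared prefix pivots; these yield $p$ pivots. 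For $i\le m_1$ the differences $R_i=\mathrm{RE}(X)_i-\mathrm{RE}(Y)_i$, restricted to $C_0$, vanish on all $p$ pivot columns (the common pivots cancel) while their restriction to the block columns is exactly the $i$-th row of $B_X-B_Y$. Hence these differences are independent of the pivot rows and contribute at least $\rk(B_X-B_Y)$ further dimensions, so $\rk(M|_{C_0})\ge p+\rk(B_X-B_Y)$. Since processing $M$ left to right makes the number of its pivots in $C_0$ equal to $\rk(M|_{C_0})$, while the columns of $(S\cup T)\cap C_1$ supply $|(S\cup T)\cap C_1|$ additional pivots, we obtain $\rk(M)\ge p+\rk(B_X-B_Y)+|(S\cup T)\cap C_1|=|S\cup T|+\rk(B_X-B_Y)$, and substituting into $d_S=2\rk(M)-2k$ gives the claim.

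I expect the genuine obstacle to be the structural step rather than the linear algebra. One must argue carefully that ``quasi-pending block in the same position'' forces the two identifying vectors to coincide on the entire prefix through the last block column, so that the block columns lie strictly to the left of every column of $S\,\triangle\,T$. This positioning is precisely what guarantees that the clearing operations performed on the distinguishing pivots, whose leading entries sit to the right of the block, never disturb the block entries, so that $B_X-B_Y$ survives intact inside $M|_{C_0}$ and contributes rank disjoint from the $S\cup T$ pivots. The ``leftmost columns'' clause of the definition together with the step condition $s_{m_1+1}>s_{m_1}+1$ forcing a non-pivot column right after the $m_1$th pivot are exactly what make this positioning available.
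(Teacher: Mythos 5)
Your proof is correct and takes essentially the same route as the paper's: both convert $d_{S}(X,Y)$ into the rank of the stacked matrix of $\mathrm{RE}(X)$ and $\mathrm{RE}(Y)$, use row subtraction to isolate $B_X-B_Y$ in the block columns (which are non-pivot columns for both subspaces), and add $\rk(B_X-B_Y)$ to the $|S\cup T|=k+\tfrac{d}{2}$ pivots guaranteed by the identifying vectors. Your prefix/suffix column split is simply a more careful justification of the additivity step that the paper handles via a column permutation and the informal remark that the remaining pivots stay in distinct columns.
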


%\textbf{I think it is true not for only pending blocks, but also for any sub matrices corresponding to the first identical zeroes of identifying vectors.
%}

%{\color{red} We should do it uniformly with $d, \delta, 2d, 2\delta$....}

\begin{proof}
Since the quasi-pending blocks are in the same position, it has to hold the first $h$ pivots of $\rm RE(X)$ and $\rm RE(Y)$ are in the same columns. To compute the rank of $\left[ \begin{array}{cc}\rm RE(X)\\\rm RE(Y)\end{array} \right] $ we permute the columns such that the $h$ first pivot columns are to the very left, then the columns of the pending block, then the other pivot columns and then the rest (WLOG in the following figure we assume that the $h+1$st pivots are also in the same column):

%\textbf{Why can we permute columns? I think we need to apply only row operations?}
\begin{footnotesize}
\[\rk\left[ \begin{array}{cccccccccccccc}
1&\dots&0 &\cellcolor{gray}  &\cellcolor{gray}&\cellcolor{gray} &\cellcolor{gray}&\cellcolor{gray} &0&\dots\\
\vdots&\ddots &\vdots&&&\cellcolor{gray}\ddots&\cellcolor{gray}B_X&\cellcolor{gray}\vdots&\vdots&\vdots\\
0&\dots&1&0&\dots&0&\cellcolor{gray} &\cellcolor{gray} &0&\dots\\
0&\dots&0&0&\hdots&0&\hdots&0&1&\hdots\\
\vdots&&&&&&&&\vdots\\
\hline
1&\dots&0 &\cellcolor{gray}  &\cellcolor{gray}&\cellcolor{gray} &\cellcolor{gray}&\cellcolor{gray} &0&\dots\\
\vdots&\ddots &\vdots&&&\cellcolor{gray}\ddots&\cellcolor{gray}B_Y&\cellcolor{gray}\vdots&\vdots&\vdots\\
0&\dots&1&0&\dots&0&\cellcolor{gray} &\cellcolor{gray} &0&\dots\\
0&\dots&0&0&\hdots&0&\hdots&0&1&\hdots\\
\vdots&&&&&&&&\vdots
\end{array}\right]\]
\end{footnotesize}
% 1&0 &\bullet &\dots&\bullet&\bullet&\dots&\bullet&0&\dots\\
% 0&1 &0& \dots& 0&\bullet&\dots&\bullet&0&\dots\\
% 0&&\dots&&&&&0&1&\dots\\
% \vdots&&&&&&&&&\vdots
% \end{array} \right] \]
Now we subtract the lower half from the upper one and get
\begin{footnotesize}
\[=\rk\left[ \begin{array}{cccccccccccccc}
1&\dots&0 &\cellcolor{gray}  &\cellcolor{gray}&\cellcolor{gray} &\cellcolor{gray}&\cellcolor{gray} &0&\dots\\
\vdots&\ddots &\vdots&&&\cellcolor{gray}\ddots&\cellcolor{gray}B_X&\cellcolor{gray}\vdots&\vdots&\vdots\\
0&\dots&1&0&\dots&0&\cellcolor{gray} &\cellcolor{gray} &0&\dots\\
0&\dots&0&0&\hdots&0&\hdots&0&1&\hdots\\
\vdots&&&&&&&&\vdots\\
\hline
0&\dots&0 &\cellcolor{gray}  &\cellcolor{gray}&\cellcolor{gray} &\cellcolor{gray}&\cellcolor{gray} &0&\dots\\
\vdots&\ddots &\vdots&&&\cellcolor{gray}\ddots&\cellcolor{gray}B_X-B_Y&\cellcolor{gray}\vdots&\vdots&\vdots\\
0&\dots&0&0&\dots&0&\cellcolor{gray} &\cellcolor{gray} &0&\dots\\
0&\dots&0&0&\hdots&0&\hdots&0&0&\hdots\\
\vdots&&&&&&&&\vdots
\end{array}\right]\]
\end{footnotesize}
% \[=\rk\left[ \begin{array}{cccccccccccccc}
% 1&0 &\bullet &\dots&\bullet&\bullet&\dots&\bullet&0&\dots\\
% 0&1 &0& \dots& 0&\bullet&\dots&\bullet&0&\dots\\
% 0&&\dots&&&&&0&1&\dots\\
% \vdots&&&&&&&&&\vdots\\
% \hline
% 0&0 &\square &\dots&\square&\square&\dots&\square&0&\dots\\
% 0&0 &0& \dots& 0&\square&\dots&\square&0&\dots\\
% 0&&\dots&&&&&0&0&\dots\\
% \vdots&&&&&&&&&\vdots
% \end{array} \right] \]
%where $\square$ denotes the difference $B_X-B_Y$.
The additional pivots of $\rm RE(X)$ and $\rm RE(Y)$ (to the right in the above representation) that were in different columns in the beginning are still in different columns, hence it follows that
$$\rk\left[ \begin{array}{cc}\rm RE(X)\\\rm RE(Y)\end{array} \right] \geq  k+\frac{1}{2}d_{H}(v(X),v(Y))+\rk(B_X-B_Y) ,$$
% $$\rk\left[ \begin{array}{cc}X\\Y\end{array} \right] \geq \left\{\begin{array}{ll} k+\frac{1}{2}d_{H}(v(X),v(Y))+2 & \textnormal{ if } \mathrm{rank} (B_{X}-B_{Y}) =2
% \\  k+\frac{1}{2}d_{H}(v(X),v(Y))+1 &\textnormal{ if } \mathrm{rank} (B_{X}-B_{Y}) =1 \end{array}\right.$$
which implies the statement with the formula
\[d_{S}(X,Y)=2\rk\left[ \begin{array}{cc}\rm RE(X)\\\rm RE(Y)\end{array} \right] -2k .\]
\end{proof}

This theorem implies that for the construction of an $(n,M,2\delta,k)$-code, by filling the (quasi-)pending blocks with a suitable Ferrers diagram rank metric code, one can choose a set of identifying vectors with lower minimum Hamming distance than $\delta$.

\section{  Constructions for  $(n,M,4,k)_q$ Codes}
\label{sec:constructionsk=4}

In this section we present a construction based on quasi-pending blocks  for $(n,M,4,k)_q$ codes with $k\geq 4$ and $n\geq 2k+2$.
This construction will then give rise to new lower bounds on the size of constant dimension codes with this minimum distance.
%In the end we will give another construction for new codes from known codes.
%For it we first need some theoretical results.
First we need the following results.
%(See the proofs in~\cite{SiTr13}.)

\begin{lemma}
\label{trm:k-2 ones}
Let $n\geq 2k+2$.
 Let $v$ be an identifying vector of length $n$ and weight $k$, such that there are $k-2$ many ones in the first $k$ positions of $v$. Then the Ferrers diagram  arising from $v$ has more or equally many dots in the first row than in the last column,
 and the upper bound for the dimension of a Ferrers diagram code with minimum distance $2$ is the number of dots that are not in the first row.
\end{lemma}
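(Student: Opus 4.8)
The plan is to read the row lengths of the Ferrers diagram directly off the pivot positions, bound the position of the first pivot from the hypothesis on the initial $k$ coordinates, and then obtain both assertions from Theorem~\ref{thm1} specialized to $\delta=2$.

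First I would fix notation by writing the positions of the \emph{ones} of $v$ (equivalently the pivot columns of $\mathrm{RE}(X)$) as $p_1<p_2<\cdots<p_k$. Starting from the reduced row echelon form, the free entries of row $i$ are exactly those in the non-pivot columns strictly to the right of $p_i$: there are $n-p_i$ columns to the right of $p_i$, of which $k-i$ are pivot columns, so row $i$ of the Ferrers diagram has
\[
a_i=n-p_i-(k-i)
\]
dots. Since $p_1<\cdots<p_k$, the sequence $a_1\ge a_2\ge\cdots\ge a_k$ is weakly decreasing, so the top row carries $\ell=a_1$ dots; and because the diagram is right-justified, the rightmost column carries $m=|\{\,i:a_i\ge 1\,\}|\le k$ dots.

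Next I would bound $p_1$. As $v$ has $k-2$ \emph{ones} among its first $k$ coordinates, exactly two of those coordinates are zero, and since coordinates $1,\dots,p_1-1$ are all zero this forces $p_1\le 3$. Hence
\[
\ell=a_1=n-p_1-k+1\ge n-k-2\ge k,
\]
where the last step uses $n\ge 2k+2$. Combined with $m\le k$, this gives $\ell\ge m$, which is the first claim: the first row has at least as many dots as the rightmost column. For the second claim I would apply Theorem~\ref{thm1} with $\delta=2$, so that $i\in\{0,1\}$; here $w_0$ counts the dots lying outside the rightmost column, i.e. (total dots)$-m$, while $w_1$ counts the dots lying outside the first row, i.e. (total dots)$-\ell$. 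Since $\ell\ge m$ we have $w_1\le w_0$, so $\min_i\{w_i\}=w_1$ is precisely the number of dots not in the first row, giving the stated dimension bound.

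The only delicate point is the diagram bookkeeping: getting the row-length formula $a_i=n-p_i-(k-i)$ correct and identifying $m$ with the number of nonzero rows, together with the tight use of both hypotheses ($p_1\le 3$ from the $k-2$ ones, and $n\ge 2k+2$) to force $\ell\ge m$. Once $\ell\ge m$ is in hand, the claim about the FDRM dimension is an immediate specialization of Theorem~\ref{thm1}, so I do not expect any further obstacle.
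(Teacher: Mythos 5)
Your proposal is correct and follows essentially the same route as the paper's proof: bound the length of the first row from below by $n-k-2$ (using that the two zeros among the first $k$ coordinates force the first pivot into one of the first three positions), bound the last column by $k$, invoke $n\geq 2k+2$ to get $\ell\geq m$, and then specialize Theorem~\ref{thm1} to $\delta=2$. You merely make explicit the pivot-position bookkeeping and the $w_0$ versus $w_1$ comparison that the paper leaves implicit.
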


\begin{proof}
 Because of the distribution of the ones, it holds that the number of dots in the first row of the Ferrers diagram is
\[n- k -2+ i ,\quad i\in\{0,1,2\}\]
and the number of dots in the last column of the Ferrers diagram is
\[k-2+j ,\quad j\in\{0,1,2\} .\]
Since we assume that $n\geq 2k+2$, the number of dots in the first row is always greater or equal to the number of dots in the last column. Then, the upper bound on the dimension directly follows from Theorem~\ref{thm1}.
\end{proof}

%\begin{corollary}\label{cor:remove_row}
% The upper bound for the dimension of a Ferrers diagram code with minimum distance $2$ in the setting of Lemma~\ref{trm:k-2 ones} is the number of dots that are not in the first row.
%\end{corollary}

%Hence, we always have a pending dot in the first row of the RREF's indicated by the identifying vectors of $\cA_2,\dots,\cA_{\binom{n-k}{2}}$.
%
%\textbf{??? If n=2k, it can be NO pending dots??}

\begin{lemma}\label{lem:cardblocks}
 The number of all matrices filling the Ferrers diagrams arising from all elements of $\F_q^k$ of weight $k-2$ as identifying vectors is
$\nu:=\sum_{j=0}^{k-2}\sum_{i=j}^{k-2} q^{i+j}  -1   .$
\end{lemma}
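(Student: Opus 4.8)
The plan is to enumerate the weight-$(k-2)$ identifying vectors of length $k$ explicitly and add up, for each one, the number of ways to fill the dots of its Ferrers diagram. Such a vector is determined by the two coordinates $1\le a<b\le k$ carrying its \emph{zeros}, so there are $\binom{k}{2}$ of them, and each filling corresponds bijectively to a subspace by the correspondence between $\cF(X)$ and $X$ recalled in Section~\ref{sec:preliminaries}. For a fixed pair $(a,b)$ I would read the dot-count straight off the pivot pattern: a non-pivot (zero) column contributes one dot for each pivot to its left. The zero in column $a$ then yields $a-1$ dots, since columns $1,\dots,a-1$ are all pivots; the zero in column $b$ yields $b-2$ dots, since columns $1,\dots,b-1$ are pivots except for column $a$. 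Hence the diagram has $a+b-3$ dots, each freely assignable in $\F_q$, so it is filled in exactly $q^{a+b-3}$ ways.

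Next I would assemble the total. Distinct pairs $(a,b)$ give identifying vectors with distinct pivot sets, and distinct fillings of a fixed diagram give distinct subspaces, so there is no overcounting and the number of subspaces is $\sum_{1\le a<b\le k} q^{a+b-3}$. The one manipulation that needs care is the change of index $j=a-1,\ i=b-2$: as $(a,b)$ runs over $1\le a<b\le k$ the pair $(j,i)$ runs bijectively over $0\le j\le i\le k-2$, with $a+b-3=i+j$, which rewrites the total as the closed double sum
\[
\sum_{j=0}^{k-2}\sum_{i=j}^{k-2} q^{\,i+j}.
\]

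Finally I would explain the $-1$. The term $i=j=0$ of this sum equals $1$ and arises solely from the pair $a=1,b=2$, i.e.\ the vector whose two zeros occupy the first two coordinates and whose pivots sit in positions $3,\dots,k$; its Ferrers diagram has no dots, so its only ``filling'' is the empty matrix attached to the single fixed subspace $\langle e_3,\dots,e_k\rangle$, which carries no free entries. Discarding this degenerate, vacuous contribution gives $\nu=\sum_{j=0}^{k-2}\sum_{i=j}^{k-2}q^{i+j}-1$, as claimed. I expect the dot-count and the re-indexing to be routine; the only genuinely delicate point is the bookkeeping of this single dotless diagram, which is exactly what produces the $-1$.
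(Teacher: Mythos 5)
Your proof is correct and takes essentially the same route as the paper's: enumerate the weight-$(k-2)$ identifying vectors by the positions $a<b$ of their two zeros, count $(a-1)+(b-2)$ dots in the resulting diagram, sum $q^{a+b-3}$ over all pairs, re-index to the stated double sum, and subtract $1$ for the dotless diagram at $a=1,b=2$. Your version is if anything more explicit than the paper's about the bijectivity of the re-indexing and the reason for the $-1$.
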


\begin{proof}
Assume the first zero is in the $j$-th and the second zero is in the $i$-th position of the identifying vector. Then the corresponding Ferrers diagram has $j-1$ dots in the first column and $i-2$ dots in the second column. I.e. there are
\[\sum_{j=1}^{k-1}  \sum_{i=j+1}^{k} {(j-1)+(i-2)} =\sum_{j=0}^{k-2}\sum_{i=j}^{k-2} {i+j}  \]
dots over all and we can fill each diagram with $i$ dots with $q^i$ many different matrices.
Hence the formula follows, since we have to subtract $1$ for the summand where $i=j=0$.
\end{proof}

% {\color{red}
%
% \begin{lemma}\label{lem:pend}
%  There are $\binom{k}{2}$ many vectors in $\F_2^k$ of weight $k-2$, of which
%
% $\left\{\begin{array}{c}3\\\binom{k}{2}-k+1\\ k-3\end{array}\right\}$ many give rise to a RREF with $\left\{\begin{array}{c}0\\1\\2\end{array}\right\}$ pending dots in the first $k$ positions of the first row.
% \end{lemma}
%
%
% %\textbf{??For identifying vector of type 1001...1 there are 2 pending dots, for 101...101..1 one pending dot...}
%
% \begin{proof}
% We first look at the case where the two zeroes are next to each other.
% If the first or the last two positions of the identifying vector are $(0,0)$, then there is no pending dot,
% if the identifying vector contains $(1,0,0,1)$, then there will be two pending dots in the first row of the constructed matrix.
%
% If the zeroes are not connected we have the following cases.
% If the first and the last position is zero, then there is no pending dot, in the other cases there is one pending dot.
% \end{proof}
%
%
% \begin{corollary}
%  For the cardinality formula to hold we need that
%  \[1+(k-2)q+\binom{k-1}{2}q^2 \geq n-4 \]
% or
%  \[1+(k-2)q+\binom{k-1}{2}q^2 \geq n-3 ,\]
% depending on if $n$ is odd or even.
% \end{corollary}
%
% }
%

We can now describe the construction ($k\geq 4, n\geq 2k+2$):
%for a {\color{blue} code in $\Gr$} with minimum distance $4$, for $k\geq 4$ and $n\geq 2k+2$:
%\textbf{(needed more details)}
%\begin{construction}

\textbf{Construction Ia.} \label{const1}
%{\color{blue} The final code $\C$ will contain the codewords with identifying vectors of the form $x||y$, with the prefix $x\in \F_2^k$  and the suffix $y\in \F_2^{n-k}$.
First, by Lemma~\ref{lm:1-factorization}, we partition the weight-$2$ vectors of $\F_{2}^{n-k}$ into classes $P_{1},\dots, P_{\ell}$ of size $\frac{\bar \ell}{2}$  (where $\ell=\bar \ell -1=n-k-1$ if $n-k$ even and $\ell=\bar \ell +1=n-k$ if $n-k$ odd) with pairwise disjoint positions of the ones.
\noindent
\begin{enumerate}
 \item We define the following sets of identifying vectors (of weight $k$):
%\[\cA_0=\{(1\dots1|| 0\dots 0)\}\]
%and
\begin{align*}
\cA_0&=\{(1\dots1|| 0\dots 0)\}\\
\cA_1&=\{(0011\dots 1 || y)  : y \in P_{1}\},\\
\cA_2&=\{(0101\dots 1 || y)  : y \in P_{2},\dots,P_{q+1}\},\\
\vdots \\
\cA_{\binom{k}{2}}&=\{(1\dots110 0 || y )  : y \in P_{\mu},\dots,P_{\nu}\}.
\end{align*}
such that the prefixes in $\cA_{1}, \dots, \cA_{\binom{k}{2}}$ are all vectors of $\F_2^{k}$ of weight $k-2$.
The number of $P_{i}$'s used in each set depends on the size of the quasi-pending block arising in the $k$ leftmost columns of the respective matrices. Thus, $\nu$ is the value from Lemma \ref{lem:cardblocks} and $\mu:=\nu-q^{2(k-2)}$.
%\begin{enumerate}
%\item the $x_i$ are all vectors of $\F_2^{k}$ of weight $k-2$.
%, hence they form a block code of length k and Hamming distance $4$ (together with the all-1-word).
%\[\implies \textnormal{ If } X \in \C^{MRD}, Y \in \C\backslash \C^{MRD}, \textnormal{ then } d_S(X,Y)\geq 4 .\]
%\item the $y_i\in\F_2^{n-k}$ (of weight $2$) are partitioned into classes like usual and distributed according to the number of pending dots the corresponding $x_i$ indicates.
%\end{enumerate}
\item For each vector $v_j$ in a given $\cA_i$ for $i\in \{{2},\dots,{\binom{k}{2}}\}$ assign a different matrix filling for the quasi-pending block in the $k$ leftmost columns of the respective matrices.
%For the cases where the Hamming minimum distance is $2$, use the quasi-pending blocks to get a subspace distance of $4$.
 Fill the  remaining part of the Ferrers diagram with a suitable FDMRD code of the minimum rank distance $2$ and lift the code to obtain $\C_{i,j}$. Define $\C_i=\bigcup_{j=1}^{|\cA_i|}\C_{i,j}$.
\item Take the largest known code $\bar \C \subseteq \mathcal{G}_{q}(k, n-k)$ with minimum distance $4$ and append $k$ zero columns in front of every matrix representation of the codewords.
\item
     The following union of codes form the final code $\C$:
     \[\C=\bigcup_{i=0}^{\binom{k}{2}}\C_i  \cup\bar \C
     \]
     where $\C_{0}$ is the lifted MRD code corresponding to $\cA_{0}$.
\end{enumerate}

%\end{construction}

\begin{remark}
If $\ell < \nu$, then we use only the sets $\cA_{0},\dots,\cA_{i}$ ($i\leq \binom{k}{2}$) such that all of $P_{1},\dots,P_{\ell}$ are used once.
\end{remark}

\begin{theorem}
If $\ell \leq \nu$, a code $\C \subseteq \Gr$ constructed according to Construction I has minimum distance $4$ and cardinality
{\footnotesize
\[
|\C|=q^{(k-1)(n-k)} + q^{(n-k-2)(k-3)} \Gauss{n-k}{2}
+ A_{q}(n-k, 4, k) .
\]}
\end{theorem}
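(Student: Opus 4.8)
The plan is to establish the distance claim and the cardinality claim separately, in both cases by decomposing $\C$ into the lifted MRD code $\C_0$, the lifted FDMRD codes $\C_{i,j}$ making up $\C_1,\dots,\C_{\binom{k}{2}}$, and the shifted copy $\bar\C$, and arguing on each type of pair. For the distance I would distinguish the following situations for two distinct codewords $X,Y$. If $v(X)=v(Y)$ (so $X,Y$ lie in one $\C_{i,j}$, or both in $\C_0$), then $d_S(X,Y)=2d_R(\mathrm{RE}(X),\mathrm{RE}(Y))\ge 4$, because the underlying (FD)MRD code has minimum rank distance $2$. If $X,Y$ share the prefix but lie in different $\C_{i,j}$ of the same $\cA_i$, then either their weight-$2$ suffixes are disjoint (same class) and $d_H(v(X),v(Y))=4$, or they overlap (different classes), in which case $d_H(v(X),v(Y))=2$, but the quasi-pending blocks occupy the same position and carry the distinct fillings assigned to distinct classes, so $\rk(B_X-B_Y)\ge 1$ and Theorem~\ref{thm4} yields $d_S(X,Y)\ge 2+2=4$.

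All remaining pairs are separated already at the level of identifying vectors, where $d_S\ge d_H$. I would check that distinct $\cA_i,\cA_{i'}$ use disjoint classes and distinct prefixes, so both the prefix and the suffix differ, giving $d_H\ge 2+2=4$; that a codeword of $\C_0$ differs from any $\cA_i$-codeword in exactly two prefix and two suffix coordinates; and that a codeword of $\bar\C$, being supported on the last $n-k$ coordinates, differs from any $\C_0$- or $\cA_i$-codeword in at least $2k-4\ge 4$ coordinates (here $k\ge4$ is used). Finally, inside $\bar\C$ the distance is $\ge4$ by the choice of $\bar\C$, and prepending $k$ zero columns changes neither the subspace distance nor the cardinality.

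For the cardinality I would first observe that $\C_0$, $\bigcup_i\C_i$ and $\bar\C$ are pairwise disjoint, as their identifying vectors carry weights $(k,0)$, $(k-2,2)$ and $(0,k)$ across the first $k$ and last $n-k$ coordinates; and the individual $\C_{i,j}$ are disjoint because, when $\ell\le\nu$, all $\ell$ classes are used once and hence every weight-$2$ suffix occurs with exactly one prefix. Then $|\C_0|=q^{(k-1)(n-k)}$ is the size of a $k\times(n-k)$ MRD code of rank distance $2$, and $|\bar\C|=A_q(n-k,4,k)$ by definition, so the task reduces to evaluating $\sum_i|\C_i|$.

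This sum is the step I expect to be the real obstacle. For an identifying vector $(p||y)$ with $p$ of weight $k-2$ and $y$ of weight $2$ supported at $s<t$ in the last $n-k$ coordinates, Lemma~\ref{trm:k-2 ones} identifies the dimension of the full rank-distance-$2$ FDMRD code with the number of dots outside the first row; counting these dots row by row gives
\[
\rho_p(y)=(n-k-2)(k-3)+C_p+D_y,
\]
where $C_p$ depends only on $p$ and $D_y=2(n-k)-s-t-1$ depends only on $y$. The key point is that fixing the quasi-pending block deletes exactly $C_p$ dots lying outside the first row, so the FDMRD code actually used on the remaining diagram has dimension $(n-k-2)(k-3)+D_y$, which no longer depends on the prefix. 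Consequently the block attached to a suffix $y$ has size $q^{(n-k-2)(k-3)}q^{D_y}$, and summing over all weight-$2$ suffixes (each used once) gives
\[
\sum_i|\C_i|=q^{(n-k-2)(k-3)}\sum_{1\le s<t\le n-k}q^{2(n-k)-s-t-1}=q^{(n-k-2)(k-3)}\Gauss{n-k}{2},
\]
by the expansion of the Gaussian coefficient as a sum of $q$-powers over pivot positions. Adding the three contributions yields the stated size. The two facts I would have to pin down carefully are the cancellation ``deleted dots outside the first row $=C_p$'' (so that the prefix dependence disappears) and that the remaining diagram still satisfies the hypothesis of Lemma~\ref{trm:k-2 ones}, so that its FDMRD dimension is again its number of dots outside the first row.
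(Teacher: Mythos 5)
Your proposal is correct and follows essentially the same route as the paper's proof: the same decomposition into $\C_0$, the $\C_{i,j}$, and $\bar\C$, the same case analysis for the distance (Hamming distance of identifying vectors for cross-set pairs, Theorem~\ref{thm4} with the quasi-pending blocks for same-$\cA_i$ pairs from different classes, the FDMRD distance for equal identifying vectors), and the same counting in which the first row and the fixed quasi-pending block are excluded, the middle $k-3$ rows contribute $q^{(n-k-2)(k-3)}$, and the bottom two rows sum over all suffixes $y$ to $\Gauss{n-k}{2}$. You merely make explicit (via the $q$-power expansion $\sum_{s<t}q^{2(n-k)-s-t-1}=\Gauss{n-k}{2}$) what the paper states as "a cardinality of $|\mathcal{G}_q(2,n-k)|$ for the lower two rows," and the two facts you flag as needing care are both routine consequences of $n\geq 2k+2$ and the attainability of the Theorem~\ref{thm1} bound for $\delta=2$.
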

\begin{proof}
It holds that $|\C_{0}|=q^{(k-1)(n-k)}$ and $|\bar \C| = A_{q}(n-k, 4, k)$.
%The subcode according to $\cA_{0}$ has cardinality $q^{(k-1)(n-k)}$.
Because of the assumption on $k$ and $q$ it follows from Lemma \ref{lem:cardblocks} that all the $y_{i} \in \F_{2}^{n-k}$ are used for the identifying vectors, hence a cardinality of $|\cG_q(2,n-k)|$ for the lower two rows.
%, hence the submatrices corresponding to the Ferrers diagrams in the lower two rows are in one-to-one correspondence with $\mathcal{G}_{q}(2,n-k)$, which has $\left[\begin{array}{c}n-k\\2  \end{array} \right]_q$ elements.
Moreover, we can fill the second to $(k-2)$-nd row of the Ferrers diagrams with anything in the construction of the FDMRD code, hence $q^{(n-k-2)(k-3)}$ possibilities for these dots.
Together with 4) from Construction I we have the lower bound on the code size.

% The minimum distance follows from Theorem~\ref{thm4} and Lemma~\ref{trm:k-2 ones}. See the detailed proof in~\cite{SiTr13}.
Let $X,Y \in \C$ be two codewords. If both are from $\bar \C$, the distance is clear. If  $X$ is from $\bar \C$ and $Y$ is not, then $d_{H}(v(X),v(Y))\geq 2(k-2)$. Since $k\geq 4$, it follows that $d_{S}(X,Y)\geq 4$. For the rest we distinguish four different cases:
\begin{enumerate}
 \item If $v(X)=v(Y)$, then the FDMRD code implies the distance.
 \item If $v(X)\neq v(Y)$ and $v(X),v(Y)$ are in the same set $\cA_{i}$ for some $i$, then $d_{H}(v(X),v(Y))\geq 2$ (because of the structure of the $P_{i}$'s). The quasi-pending blocks then imply by Theorem~\ref{thm4} that $d_{S}(X,Y) \geq 2+2=4$.
 \item If $v(X) \in \cA_{0},v(Y)\in \cA_{j}$, where $j>0$, then $d_{H}(v(X),v(Y)\geq 4$. Hence, $d_{S}(X,Y) \geq 4$.
 \item If $v(X) \in \cA_{i},v(Y)\in \cA_{j}$, where $i\neq j$ and $i,j>0$, then $d_{H}(v(X),v(Y)\geq 4$ because the first $k$ coordinates have minimum distance $\geq 2$ and the last $n-k$ coordinates have minimum distance $\geq 2$, since they are in different $P_{i}$'s. Hence, $d_{S}(X,Y) \geq 4$.
\end{enumerate}
\end{proof}

%
%
%\begin{theorem}\label{thm:card}
%In the setting of Construction \ref{const1}, assume that $\sum_{j=0}^{k-2}\sum_{i=j}^{k-2} q^{i+j}    -1 \geq \ell$.
%Then the cardinality of a code constructed according to steps 1) -- 3) in Construction \ref{const1}  is %at least
%\[q^{(k-1)(n-k)} + q^{(n-k-2)(k-3)} \left[\begin{array}{c}n-k\\2  \end{array}
% \right]_q.\]
%\end{theorem}
%\begin{proof}
%Because of the size of a Gabidulin code we know that the subcode according to $\cA_{1}$ has cardinality $q^{(k-1)(n-k)}$. Because of the assumption on $k$ and $q$ it follows from Lemma \ref{lem:cardblocks} that all the $y_{i} \in \F_{2}^{n-k}$ are used for the identifying vectors, hence the submatrices corresponding to the Ferrers diagrams in the lower two rows are in one-to-one correspondence with $\mathcal{G}_{q}(2,n-k)$, which has $\left[\begin{array}{c}n-k\\2  \end{array}
% \right]_q$ elements. Moreover, we can fill the second to $(k-2)$-nd row of the Ferrers diagrams with anything in the construction of the FDMRD ccode, hence at least $q^{(n-k-2)(k-3)}$ possibilities.
%\end{proof}

We can now retrieve a lower bound on the size of constant dimension codes for minimum subspace distance $4$.
\begin{corollary}\label{bound1}
Let $k\geq 4, n\geq 2k+2$ and $\sum_{j=0}^{k-2}\sum_{i=j}^{k-2} q^{i+j}    -1 \geq n-k$ if $n-k$ is  odd (otherwise $\geq n-k-1$). Then
{\footnotesize $$A_{q}(n, 4, k) \geq q^{(k-1)(n-k)} + q^{(n-k-2)(k-3)} \left[\begin{array}{c}n-k\\2  \end{array} \right]_q + A_{q}(n-k, 4, k).$$}
\end{corollary}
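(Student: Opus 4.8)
The plan is to recognize that this corollary is nothing more than the preceding theorem on Construction Ia, restated in the language of the extremal function $A_q(n,d,k)$. Consequently the only substantive thing to check is that the hypothesis stated here is literally the hypothesis $\ell \leq \nu$ under which that theorem guarantees an $(n,M,4,k)_q$ code of the stated cardinality; once that is in place, the bound follows from the definition of $A_q$.

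First I would recall the two quantities involved. By Lemma~\ref{lem:cardblocks} the number $\nu$ is exactly $\sum_{j=0}^{k-2}\sum_{i=j}^{k-2} q^{i+j} - 1$, which is the expression appearing on the left-hand side of the corollary's hypothesis. From Construction Ia, the parameter $\ell$ equals $n-k-1$ when $n-k$ is even and equals $n-k$ when $n-k$ is odd. Writing out the inequality $\ell \leq \nu$ in the two parities therefore reproduces precisely the two conditions in the statement: $\nu \geq n-k$ when $n-k$ is odd, and $\nu \geq n-k-1$ when $n-k$ is even. Hence the hypothesis of the corollary and the hypothesis $\ell \leq \nu$ of the theorem coincide.

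With this identification, the theorem yields a code $\C \subseteq \Gr$ satisfying $d_S(X,Y) \geq 4$ for all distinct $X,Y \in \C$ and
\[
|\C| = q^{(k-1)(n-k)} + q^{(n-k-2)(k-3)} \Gauss{n-k}{2} + A_q(n-k,4,k),
\]
where $\Gauss{n-k}{2}$ is exactly the bracketed $q$-Gaussian coefficient written in the corollary. Since $A_q(n,4,k)$ is by definition the maximum size of an $(n,M,4,k)_q$ constant dimension code and $\C$ is such a code, we obtain $A_q(n,4,k) \geq |\C|$, which is the asserted bound.

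I do not expect any genuine obstacle here: the corollary is an immediate consequence of the theorem, and the only care needed is the bookkeeping of matching the two parity cases of $\ell \leq \nu$ to the two inequalities in the statement. I would also note in passing that the standing assumptions $k\geq 4$ and $n\geq 2k+2$ (in particular $n-k \geq k+2 > k$) ensure that the term $A_q(n-k,4,k)$, which enters through the appended code $\bar\C \subseteq \mathcal{G}_q(k,n-k)$ in step~3 of the construction, refers to a nonempty Grassmannian and is therefore well defined.
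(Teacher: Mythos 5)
Your proposal is correct and matches the paper's (implicit) reasoning: the corollary is stated as an immediate consequence of the preceding theorem on Construction Ia, and your verification that the hypothesis $\sum_{j=0}^{k-2}\sum_{i=j}^{k-2} q^{i+j}-1 \geq n-k$ (resp.\ $\geq n-k-1$) is exactly $\ell\leq\nu$ in the two parity cases, followed by invoking the definition of $A_q(n,4,k)$, is precisely the intended argument. No issues.
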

This bound is always tighter than the ones given by the Reed-Solomon like construction \cite{KK} and the multicomponent extension of this \cite{ga,tr}.
%\begin{proof}
%Follows from Theorem \ref{thm:card} together with 4) from Construction \ref{const1}.
%\end{proof}

%{\color{red}
%I think this example is not correct...
%\begin{example}
%\label{ex:4-9}
%Let $k=4$, $n=9$, and $q=2$. Then the cardinality of the constructed code is at least $2^{15}+\BinGauss{5}{2}\cdot2^5=2^{15}+4960$. The multilevel construction in~\cite{EtSi09} provides a code of size $2^{15}+4177$, and the largest previously known (lexicographic) code obtained by a greedy algorithm~\cite{SiEt10}  is of size $2^{15}+4881$.
%\end{example}
%}
%
%\begin{example}
%\label{ex:4-10}
%Let $k=4$, $n=9$, and $q=2$. Then the cardinality of the constructed code is at least $2^{15}+2^{3}\BinGauss{5}{2}=2^{15}+1240$.
%The lifted Gabidulin code has cardinality $2^{15}$ and the multi-component lifting construction gives rise to a cardinality of $2^{15}+257$. On the other hand,
%the multilevel construction in~\cite{EtSi09} provides a code of size $2^{15}+4177$, and the largest previously known (lexicographic) code obtained by a greedy algorithm~\cite{SiEt10}  is of size $2^{15}+4881$.
%\end{example}

%?????
%Hence, one can see that Construction \ref{const1} beats the one of \cite{} but it is not optimal, since one can find better codes. Thus, the bound of Corollary \ref{bound1} is not tight. It is tighter than the other known general bounds for arbitrary $n$, though, since the codes beating Construction \ref{const1} have to be computed separately for each $n$.
%???

Note, that in the construction we did not use the dots in the
%left $i$, $i=0,1,2$, columns of Ferrers diagrams corresponding to the $2$ zeroes on the prefix $x_i$
quasi-pending blocks
for the calculation of the size of a FDMRD code. Thus, the bound of Corollary~\ref{bound1} is not tight.
%To make it tighter, one can specify the choice of prefixes and suffixes in the identifying vectors which contribute the largest number of codewords and construct FDMRD codes on all the dots of Ferrers diagrams, as illustrated in the following construction.
To make it tighter, one can use less pending blocks and larger FDMRD codes, as illustrated in the following construction.
We denote by $P_y$ the class of suffixes which contains the suffix vector $y$ (in the partition of Lemma~\ref{lm:1-factorization}).

\textbf{Construction Ib.} First, in addition to $\cA_0$ of Construction~Ia, we define the following sets of identifying vectors:
$$\bar\cA_1=\{(11...1100||y): y \in P_{1100...00}\},
$$
$$\bar\cA_2=\{(11...1010||y): y \in P_{1010...00}\},
$$
$$\bar\cA_3=\{(11...0110||y): y \in P_{1001...00}\},
$$
$$\bar\cA_4=\{(11...1001||y): y \in P_{0110...00}\}.
$$
All the other identifying vectors are distributed as  in Construction Ia. The steps $2)-4)$ of Construction Ia remain the same.
Then the lower bound on the cardinality becomes
\begin{corollary} If $\sum_{j=0}^{k-2}\sum_{i=j}^{k-2} q^{i+j}    - \sum_{i=4}^{5}q^{2k-i} -2q^{2k-6}\geq n-k$, then
{\footnotesize $$
A_{q}(n, 4, k) \geq q^{(k-1)(n-k)} + q^{(n-k-2)(k-3)} \Gauss{n-k}{2}
$$
$$
+ (q^{2(k-3)}-1)q^{(k-1)(n-k-2)}+(q^{2(k-3)-1}-1)q^{(k-1)(n-k-2)-1}
$$
$$
+2(q^{2(k-4)}-1)q^{(k-1)(n-k-2)-2}+A_{q}(n-k, 4, k).
$$}
\end{corollary}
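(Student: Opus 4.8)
The plan is to follow the proof of the preceding theorem, since Construction~Ib changes only how four distinguished prefixes are used: instead of reserving their quasi-pending blocks to separate several classes, each prefix is devoted to a single suffix class and its block is filled freely by a larger FDMRD code. I would first verify the minimum distance and then read off the cardinality as the Construction~Ia count plus four corrections.

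For the distance I argue by cases, exactly as before. If $v(X)=v(Y)$ the distance is forced by the lifted FDMRD code via Lemma~\ref{lem:dist_lift}. If $X,Y$ lie in the same $\bar\cA_i$ with $v(X)\ne v(Y)$, their suffixes come from one class, hence have disjoint supports and $d_H(v(X),v(Y))=4$, so $d_S(X,Y)\ge4$ whatever the block fillings are; this is precisely what lets the block be used freely. For $X\in\bar\cA_i$, $Y\in\bar\cA_j$ with $i\ne j$, the four prefixes differ pairwise in at least two positions, and in every pair that differs in exactly two positions the two named edges among $\{1,2\},\{1,3\},\{1,4\},\{2,3\}$ share a vertex, so by Lemma~\ref{lm:1-factorization} the classes are distinct and the suffixes add Hamming distance $\ge2$; the single remaining pair $(\bar\cA_3,\bar\cA_4)$ already has prefix distance $4$. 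Hence $d_H\ge4$ in all cases. Finally, since the four distinguished prefixes are dedicated and appear nowhere else, a codeword of some $\bar\cA_i$ meets a codeword of $\C_0$, of $\bar\C$, or of the Construction~Ia part in differing prefixes, and the combined Hamming distance is $\ge4$ exactly as in the earlier proof.

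For the cardinality, every weight-$2$ suffix is still used once, so the base count $q^{(k-1)(n-k)}+q^{(n-k-2)(k-3)}\Gauss{n-k}{2}+A_q(n-k,4,k)$ of the preceding theorem survives. The four extra summands are the gain of using the full Ferrers diagram instead of the block-excluded one for the named suffix of each $\bar\cA_i$. By Theorem~\ref{thm1} and Lemma~\ref{trm:k-2 ones} the distance-$2$ FDMRD dimension is (total dots)$-$(first-row dots); carrying out the per-row dot count for the four prefix/suffix pairs gives block-excluded dimensions $(k-1)(n-k-2)$, $(k-1)(n-k-2)-1$, $(k-1)(n-k-2)-2$, $(k-1)(n-k-2)-2$, while re-including the quasi-pending block adds $2(k-3)$, $2(k-3)-1$, $2(k-4)$, $2(k-4)$ free dots in the usable rows $2,\dots,k-2$. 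Removing the one block value already counted in the base produces the factors $q^{2(k-3)}-1$, $q^{2(k-3)-1}-1$, $q^{2(k-4)}-1$, $q^{2(k-4)}-1$, and multiplying by the base contributions gives the claimed terms. I would also check that the hypothesis $\nu+1-(q^{2k-4}+q^{2k-5}+2q^{2k-6})\ge n-k$ leaves the remaining prefixes enough block fillings to separate all $\ell$ classes, the analogue of $\ell\le\nu$.

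I expect the cardinality bookkeeping to be the main obstacle, not the distance: one has to perform the four row-by-row dot counts correctly, isolate the block contribution in rows $2,\dots,k-2$ from the base contribution, and justify the $-1$ that discards the block value already present in the $q^{(n-k-2)(k-3)}\Gauss{n-k}{2}$ term. A secondary point to secure is that the (near-)one-factorization can be chosen so that $\{1,2\},\{1,3\},\{1,4\},\{2,3\}$ lie in four distinct classes, so that no class is consumed twice and the base count is undisturbed.
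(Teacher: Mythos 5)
Your proposal is correct and follows exactly the route the paper intends: the paper states this corollary without proof as an immediate consequence of Construction~Ib, and your argument is the evident adaptation of the preceding theorem's proof (same four-case distance analysis, plus the per-prefix gain $(q^{\text{block dots below row 1}}-1)$ times the base contribution of the named suffix, which matches the exponents $2(k-3)$, $2(k-3)-1$, $2(k-4)$, $2(k-4)$ and $(k-1)(n-k-2)$, $(k-1)(n-k-2)-1$, $(k-1)(n-k-2)-2$, $(k-1)(n-k-2)-2$). Your two flagged caveats --- the hypothesis guaranteeing enough remaining block fillings, and choosing the (near-)one-factorization so that the classes of $\{1,4\}$ and $\{2,3\}$ are distinct --- are exactly the points the paper leaves implicit.
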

Note, that one can use this idea on more $\cA_{i}$'s, as long as there are enough pending blocks such that all $P_{i}$'s are used.

Moreover, instead of using all the classes $P_i$ we can use the classes which contribute more codewords more then once with the disjoint prefixes.
We illustrate this idea for a code having $k=4$ and $n=10$.
It appears, that the code obtained by this construction is the largest known code.
%The same idea can be used  for any parameters $k$ and $n$, however the construction is specific for any set of parameters.

\begin{example}
Let $q=2$, $k=4$, $n=10$. We partition the binary vectors of length 6 and weight 2 into the following 5 classes:
$P_1=\{110000,001010,000101\},
 P_2= \{101000, 010001,000110\},
P_3= \{011000, 100100, 000011\},\\
P_4= \{010100, 100010,001001\},
P_5= \{100001, 010010,\\001100\}.
$
We define $\cA_{0}$ as previously and
$$\cA_1=\{(1100||y) : y \in P_1\}, \cA_2=\{(0011||y) : y \in P_1\},
$$
$$\cA_3=\{(0110||y) : y \in P_4\},
\cA_4=\{(1001||y) : y \in P_4\},
$$
{\small
$$\cA_5=\{(1010||y) : y \in P_2\cup P_3\},
% \textmd{ (using pending dot)},
\cA_6=\{(0101||y) : y \in P_2\cup P_3\} ,
%\textmd{ (using pending dot)}.
$$}
where we use the pending dot in $\cA_{5}$ and $\cA_{6}$.
 Note, that we do not use $P_{5}$. Also, the FDMRD codes are now constructed for the whole Ferrers diagrams (without the pending dot), and not only for the last $6$ columns.
%The size of the final code is $2^{18}+37456$.
We can add $A_{2}(6,4,4)=A_{2}(6,4,2) = (2^{6}-1)/(2^{2}-1)= 21$ codewords corresponding to set 4) in Construction Ia. The size of the final code is $2^{18}+37477$. The largest previously known code was obtained by the multilevel construction and has size $2^{18}+34768$~\cite{EtSi09}.
\end{example}

In the following we discuss a construction of a  new constant dimension code with minimum distance $4$ from a given one.

\begin{theorem}
\label{trm:code extension}
Let $\C$ be an $(n,M,4,k)_q$ constant dimension code. Let $\Delta$ be an integer such that $\Delta\geq k$.
 %which contains only the identifying vectors of the type $111\ldots 100\ldots 00$ and of the type described in Theorem~\ref{trm:k-2 ones}.
 Then, there exists an $(n'=n+\Delta, M', 4,k)$ code $\C'$  with $M'=Mq^{\Delta(k-1)}$.
\end{theorem}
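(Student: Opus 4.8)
The plan is to extend each codeword of $\C$ by appending a rank-metric block of the right dimension. Since $\Delta\geq k$, there is an MRD code $\mathcal{M}\subseteq\F_q^{k\times\Delta}$ of minimum rank distance $2$, and by the dimension formula for MRD codes (cf.\ the remark after Theorem~\ref{thm1} and \cite{Gab85,Rot91}) it has exactly $q^{\Delta(k-1)}$ codewords. I would let $\C'$ consist of all subspaces of $\F_q^{n+\Delta}$ spanned by the rows of the $k\times(n+\Delta)$ matrix $[\,\mathrm{RE}(X)\mid M\,]$, where $X$ runs over $\C$ and $M$ runs over $\mathcal{M}$. Appending columns on the right of a matrix in reduced row echelon form creates no new pivots, so each $[\,\mathrm{RE}(X)\mid M\,]$ is again in reduced row echelon form and represents a $k$-dimensional subspace with identifying vector $(v(X)\,\|\,0^{\Delta})$. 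The claimed cardinality $M'=Mq^{\Delta(k-1)}$ then follows once the codewords are shown to be pairwise distinct, which is immediate from the distance estimate below.

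For the distance I would use the identity $d_S(X,Y)=2\,\rk\!\left[\begin{smallmatrix}\mathrm{RE}(X)\\\mathrm{RE}(Y)\end{smallmatrix}\right]-2k$ already invoked in the proof of Theorem~\ref{thm4}. Let $X'=[\,\mathrm{RE}(X)\mid M_X\,]$ and $Y'=[\,\mathrm{RE}(Y)\mid M_Y\,]$ with $(X,M_X)\neq(Y,M_Y)$, and split into two cases. If $X\neq Y$, the distance-$4$ property of $\C$ gives $\rk\!\left[\begin{smallmatrix}\mathrm{RE}(X)\\\mathrm{RE}(Y)\end{smallmatrix}\right]\geq k+2$; since the rank of the full stacked matrix $\left[\begin{smallmatrix}X'\\Y'\end{smallmatrix}\right]$ is at least the rank of its first $n$ columns, we again get rank at least $k+2$ and hence $d_S(X',Y')\geq4$. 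If instead $X=Y$ but $M_X\neq M_Y$, I would subtract the top $k$ rows from the bottom $k$ rows to reach $\left[\begin{smallmatrix}\mathrm{RE}(X)&M_X\\0&M_Y-M_X\end{smallmatrix}\right]$.

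The step I expect to require the most care is the rank of this reduced matrix. Because $\mathrm{RE}(X)$ has full row rank $k$ with all pivots among the first $n$ coordinates, a nonzero combination of the top rows can never vanish on those first $n$ coordinates, whereas every bottom row $[\,0\mid M_Y-M_X\,]$ does; hence the top row space meets the span of the bottom rows only in $0$, and the total rank equals $k+\rk(M_Y-M_X)$. As $M_X,M_Y$ are distinct codewords of $\mathcal{M}$, we have $\rk(M_Y-M_X)\geq2$, so once more the stacked matrix has rank at least $k+2$ and $d_S(X',Y')\geq4$. With the distance verified in both cases, distinct index pairs produce distinct subspaces, confirming $|\C'|=Mq^{\Delta(k-1)}$. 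The only other point to watch is the orientation $\Delta\geq k$, which is exactly what guarantees that $\mathcal{M}\subseteq\F_q^{k\times\Delta}$ has dimension $\Delta(k-1)$ (and not $k(\Delta-1)$), matching the target factor.
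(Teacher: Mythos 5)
Your proposal is correct and follows essentially the same route as the paper: append a $[k\times\Delta,\Delta(k-1),2]$ MRD code in the extra $\Delta$ columns and lift. The paper's proof is a single sentence, so your case analysis (distinct $X,Y$ versus equal $X$ with distinct MRD matrices) simply supplies the distance verification the paper leaves implicit.
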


\begin{proof}
To the generator matrix of each codeword of $\C$ we append a $[k\times \Delta, \Delta(k-1), 2]$-MRD  code in the additional columns. This MRD code has cardinality $q^{\Delta (k-1)}$ by Theorem~\ref{thm1}.
%Follows directly from the extension of every Ferrers diagram by $\Delta$ additional columns, Theorem~\ref{thm1} and the property~(\ref{eq:equalID}) of subspace distance.
\end{proof}

%\begin{corollary}
%Let $q=2, n=8+\Delta$. Then there exists an $(n, M, 4,4)$ code with $M=2^{3(4+\Delta)}+700q^{3\Delta}+2^{\min\{3\Delta,4(\Delta-1)\}}$.
%Thus, for $n\geq 12$:
%\[A_{2}(n,4,4) \geq 2^{3(n-4)}+701 {\color{red}???}2^{3(n-8)}  .\]
% {\color{red}What about my comment from last version???}
%\end{corollary}

\begin{example}We take the  $(8,2^{12}+701,4,4)_2$  code $\C$ constructed in~\cite{EtSi12} and apply on it Theorem~\ref{trm:code extension} with $\Delta=4$.
Then the code $|\C'|=2^{24}+701\cdot 2^{12}=2^{24} + 2871296$.
%For $\Delta=2$, the code has size $2^{18}+700\cdot 2^{6}+21=2^{15}+44821$.
The largest previously known code of size $2^{24}+2290845$ was obtained in~\cite{EtSi09}.
\end{example}

%%%%%%%%%%%%%%%%%%%%%%%%%%%%%%%%%%%%%%%%%%%%%%%%%%%%%%%%%%%%%%%%%%%%%%%%%%%%%%%%%%%%%%%%%%%%%%%%%%%%%%%%%%%%%%%%%%%%%%%%%%%%%%%%%%%%%%%%%%%%%%%%%%%%%%%%%%%%%%%%%%%%%%%%%%%%%%%%%%%%%%%%%%%%%%%%%%%%%%%%%%%%%%%%%%%%%%%%%%%%%%%%%%%%%%%%%%%%%%%%%%%%%%%%%%%%%%%%%%%%%%%%%%%%%%%%%%%%%%%%%%%%%%%%%%%%%%%%%%%%%%%

%
%
%
%\section{Generalized construction for any $k$ and distance $2k-2$}
%
%\begin{enumerate}
% \item Choose the following identifying vectors of weight $k$:
%\[v_0=(1\dots1 0\dots 0)\]
%\[v_i=(x_i||y_i),\]
%where
%\begin{enumerate}
%\item the $x_i$ are all vectors of length $k$ and weight $1$.
%\[\implies \textnormal{ If } X \in \C^{MRD}, Y \in \C\backslash \C^{MRD}, \textnormal{ then } d_S(X,Y)\geq 2k-2 .\]
%\item the $y_i$ are partitioned into $s$ classes such that the minimum distance in each class is $2k-2$
%\[\implies \textnormal{ If } X , Y \in \C\backslash \C^{MRD} and , \textnormal{ then } d_S(X,Y)\geq 2k-2 .\]
%\end{enumerate}
%\end{enumerate}
%
%\textbf{It might be a problem with such a partition ( of (b)) since it requires an existence of resolvable block designs, which are known to exist for only specific parameters. But anyway, we can try it..
%}
%

%%%%%%%%%%%%%%%%%%%%%%%%%%%%%%%%%%%%%%%%%%%%%%%%%%%%%%%%%%%%%%%%%%%%%%%%%%%%%%%%%%%%%%%%%%%%%%%%

\section{Construction for $(n,M,2(k-1),k)_{q}$ Codes}
\label{sec:recursion}

In this section we provide a recursive construction for  $(n,M,2(k-1),k)_{q}$  codes, which uses the pending dots based construction %for $k=3$
described in Section~\ref{sec:preliminaries} as an initial step. Codes obtained by this construction contain the lifted MRD code. An upper bound on the cardinality of such codes is given in~\cite{EtSi12}.
%
%\begin{theorem}\label{trm:upper bound from Steiner Structure}
%If an $(n,M,2(k-1),k)_q$ code $\mathbb{C}$, $k \geq 3$, contains
%an $(n,q^{(n-k)(k-\delta+1)},2(k-1),k)_q$ lifted MRD code then
%$$M\leq
%q^{2(n-k)}+ \cA_q (n-k,2(k-2),k-1).$$
%\end{theorem}
%
%Note that the for $k=3$ this bound is given by
%$$
%M\leq q^{2(n-3)}+\Gauss{n-3}{2},
%$$
The codes obtained by Construction 0 attain this bound for $k=3$.
Our recursive construction provides a new lower bound on the cardinality of such codes for general $k$.
%
%\begin{align*}
%\cA(n,2(k-1),k)\geq & q^{2(n-k)}+q^{2(n-(k+(k-1))}+\ldots \\
%+&q^{2(n-(\sum_{i=3}^{k}i))}+\Gauss{n-(\sum_{i=3}^{k}i))}{2}
%\label{eq:new_lower_bound},
%\end{align*}
% for  $n\geq \sum_{i=3}^{k}i+2+k$ and
%\[q^2+q+1\geq  \left\{\begin{array}{cc}
%                n-\sum_{i=3}^{k}i, & \textmd{ for  odd }n-\sum_{i=3}^{k}i \\
%                n-\sum_{i=3}^{k}i-1, & \textmd{ for even }n-\sum_{i=3}^{k}i \\
%              \end{array}\right..
%\]
%We will illustrate by the examples that the codes obtained by this recursive construction are larger then the codes obtained by the multilevel construction.
%

First, we need the following lemma which is a simple generalization of Lemma~\ref{trm:k-2 ones}.
\begin{lemma}
\label{lm:gen_k-2}
 Let $n-k-2\geq n_1\geq k-2 $ and $v$ be an identifying vector of length $n$ and weight $k$, such that there are $k-2$ many ones in the first $n_1$ positions of $v$. Then the Ferrers diagram arising from $v$ has more or equally many dots in any of the first $k-2$ rows  than in the last column,
 and the upper bound for the dimension of a Ferrers diagram code with minimum distance $k-1$ is the number of dots that are not in the first $k-2$ rows.
\end{lemma}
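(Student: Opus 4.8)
The plan is to follow the template of Lemma~\ref{trm:k-2 ones} but to track all of the first $k-2$ rows against the rightmost column, rather than just the top row against it. First I would fix coordinates: write the positions of the ones of $v$ as $p_1 < p_2 < \cdots < p_k$. A direct count on the reduced row echelon form shows that row $j$ of the associated Ferrers diagram contains exactly $r_j = n-k-p_j+j$ dots; since $p_j$ strictly increases while $j$ increases by one, the sequence $r_1 \geq r_2 \geq \cdots$ is non-increasing, so the shortest of the first $k-2$ rows is row $k-2$.

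The first assertion then reduces to a single estimate. The hypothesis that exactly $k-2$ ones sit in the first $n_1$ positions forces $p_{k-2} \leq n_1$ and $p_{k-1} > n_1$, hence $r_{k-2} = n-2-p_{k-2} \geq n-2-n_1 \geq n-2-(n-k-2) = k$, where I use $n_1 \leq n-k-2$. Because the diagram is right-justified with weakly decreasing row lengths, the rightmost column is the tallest and its number of dots $R$ equals the number of nonempty rows, so $R \leq k$. Combining, every one of the first $k-2$ rows satisfies $r_j \geq r_{k-2} \geq k \geq R$, which is exactly the claim that each of these rows is at least as long as the last column.

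For the second assertion I would specialize Theorem~\ref{thm1} to $\delta = k-1$, so the index runs over $0 \leq i \leq k-2$, and write $w_i = T_i - C_i$, where $T_i = \sum_{j>i} r_j$ is the number of dots below the first $i$ rows and $C_i$ is the number of those dots lying in the rightmost $\delta-1-i = k-2-i$ columns. The target value is $w_{k-2} = r_{k-1}+r_k$, the number of dots outside the first $k-2$ rows, and I must show $w_{k-2} = \min_i w_i$. Since every column contains at most $R$ dots, $C_i \leq (k-2-i)R$; on the other hand $T_i - T_{k-2} = \sum_{j=i+1}^{k-2} r_j \geq (k-2-i)k$ by the first part. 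Subtracting gives $w_i - w_{k-2} \geq (k-2-i)(k-R) \geq 0$ for every $i$, so the minimum is attained at $i=k-2$ and the bound of Theorem~\ref{thm1} is $q^{w_{k-2}}$, as stated.

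The main obstacle is this second step: the bound of Theorem~\ref{thm1} is a minimum over all cut-offs $i$, and a priori trading a removed row for a re-admitted block of columns could push some $w_i$ below $w_{k-2}$. The entire purpose of the first part is to supply the inequality $r_j \geq R$ for $j \leq k-2$ that controls this trade-off; the clean accounting $C_i \leq (k-2-i)R$ against $\sum_{j=i+1}^{k-2} r_j \geq (k-2-i)k$ is what makes the monotonicity $w_i \geq w_{k-2}$ fall out, and securing the per-column bound $R$ (rather than a cruder $k-i$) is the only place where care is needed.
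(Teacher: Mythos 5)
Your proof is correct and follows essentially the same route as the paper's: both establish that each of the first $k-2$ rows has at least $n-n_1-2\geq k$ dots while no column has more than $k$, and then conclude that the minimum in Theorem~\ref{thm1} is attained at $i=k-2$. Your version merely makes the final monotonicity step explicit via the estimate $w_i-w_{k-2}\geq (k-2-i)(k-R)\geq 0$, which the paper leaves implicit.
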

 \begin{proof}
 Naturally, the last column of the Ferrers diagram has at most $k$ many dots. Since there are $k-2$ many ones in the first $n_{1}$ positions of $v$, it follows that there are $n-n_{1}-2$ zeros in the last $n-n_{1}$ positions of $v$. Thus, there are at least $n-n_{1}-2$ many dots in any but the lower two rows of the Ferrers diagram arising from $v$. Therefore, if $n-n_{1}-2\geq k \iff n-k-2\geq n_{1}$ the Ferrers diagram arising from $v$ has more or equally many dots in any of the first $k-2$ rows  than in the last column. It holds that any column has at most as many dots as the last one.

 From Theorem \ref{thm1} we know that the bound on the dimension of the FDRM code is given by the minimum of dots not contained in the first $i$ rows and last $k-2-i$ columns for $i=0,\dots,k-2$. Since, for the given $i$'s, the previous statement holds, the minimum is attained for $i=k-2$.
 \end{proof}

%QUESTION: Is there a construction where we can always construct such a code meeting the bound???
\begin{remark}\label{rem5}
If a $m\times \ell$-Ferrers diagram has $\delta$  rows with $\ell$ dots each, then the construction of~\cite{EtSi09} provides respective FDMRD codes of minimum distance $\delta+1$
attaining the bound of Theorem~\ref{thm1}.
\end{remark}

%Note, that by Lemma~\ref{lm:gen_k-2}, in the diagrams which correspond to the last $n-\sum_{i=3}^k i-2$ columns, there are always $k-2$ full first rows. Then the construction of~\cite{EtSi09} provides such codes. }

\begin{lemma}\label{lemk-3}
For a $m\times \ell$-Ferrers diagram where the $j$th row has at least $x$ more dots than the $(j+1)$th row for $1\leq j \leq m-1$ and the lowest row has $x$ many dots, one can construct a FDMRD code with minimum rank distance $m$ and cardinality $q^{x}$ as follows. For each codeword take a different $w\in \F_{q}^{x}$ and fill the first $x$ dots of every row with this vector, whereas all other dots are filled with zeros.
\end{lemma}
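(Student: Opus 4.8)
The plan is to analyze the explicit linear map $w\mapsto M_w$ sending each $w\in\F_q^x$ to the $m\times\ell$ matrix $M_w$ whose entries on $\cF$ are obtained by writing $w$ into the leftmost $x$ dots of every row and zero into all remaining dots (and zero off $\cF$). First I would fix column coordinates $1,\dots,\ell$ from left to right and let $n_j$ denote the number of dots in row $j$, so that the hypotheses read $n_j\ge n_{j+1}+x$ for $1\le j\le m-1$ and $n_m=x$. Since the diagram is right-justified, row $j$ occupies columns $\ell-n_j+1,\dots,\ell$, and its copy of $w$ sits in the column-range $R_j:=[\ell-n_j+1,\ell-n_j+x]$. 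Iterating the gap hypothesis gives $n_j\ge(m-j+1)x$, so in particular $\ell=n_1\ge mx$; I would dispose of the trivial case $x=0$ at the outset.

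The cardinality and linearity are immediate: the assignment $w\mapsto M_w$ is $\F_q$-linear, and it is injective because $w$ can be read off directly from $M_w$ (e.g.\ from row $m$, which has exactly $x$ dots, all filled by $w$). Hence $\cC_\cF=\{M_w:w\in\F_q^x\}$ is a linear subspace of dimension $x$, and $|\cC_\cF|=q^x$.

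The core step is to show $\rk(M_w)=m$ for every nonzero $w$, which for a linear code is exactly the claim that its minimum rank distance is $m$. The key geometric observation is that the ranges $R_1,\dots,R_m$ are pairwise disjoint and strictly increasing from left to right: from $n_j\ge n_{j+1}+x$ one gets $\ell-n_j+x\le \ell-n_{j+1}<\ell-n_{j+1}+1$, so $R_j$ lies entirely to the left of $R_{j+1}$. Now fix a coordinate $t$ with $w_t\neq 0$ and set $c_j:=\ell-n_j+t\in R_j$, the column carrying the $t$-th entry of the copy of $w$ in row $j$. The columns $c_1<\dots<c_m$ are distinct, and I would check that in column $c_j$ the only nonzero entry of $M_w$ is $w_t$ in row $j$: for a row $i<j$ the range $R_i$ ends at $\ell-n_i+x$, and $n_i\ge n_j+x$ together with $t\ge 1$ forces $c_j>\ell-n_i+x$, so $c_j$ lies to the right of $R_i$ (entry zero); for a row $i>j$ one has $n_i\le n_j-x$, which together with $t\le x$ forces $c_j<\ell-n_i+1$, so $c_j$ lies to the left of all dots of row $i$ (entry zero). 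Thus the submatrix of $M_w$ on columns $c_1,\dots,c_m$ equals $w_t$ times the identity matrix, whence $\rk(M_w)\ge m$; since $M_w$ has only $m$ rows, $\rk(M_w)=m$.

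Finally I would record that the code is FDMRD. Taking $\delta=m$, the index $i=m-1$ in Theorem~\ref{thm1} counts exactly the dots not in the first $m-1$ rows and not in the rightmost $0$ columns, i.e.\ the $x$ dots of row $m$, so the bound there is $q^x$; our code attains it. The main obstacle is the bookkeeping of the previous paragraph---verifying that each chosen column $c_j$ isolates a single nonzero entry---since it relies on combining the right-justification of $\cF$ with the quantitative gap $n_j-n_{j+1}\ge x$ in both directions (rows above and below $j$). Once that is in place, the rank computation and the FDMRD conclusion are routine.
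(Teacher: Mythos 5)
Your proof is correct and follows essentially the same route as the paper's (much terser) argument: disjointness of the column ranges carrying the copies of $w$, a full-rank/echelon observation for each nonzero difference of codewords, and attainment of the Theorem~\ref{thm1} bound at $i=m-1$ via the last row. Your write-up simply makes explicit the bookkeeping that the paper summarizes as ``no column-wise intersection'' and ``already row-reduced.''
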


\begin{proof}
The minimum distance follows easily from the fact that the positions of the $w$'s in each row have no column-wise intersection. Since they are all different, any difference of two codewords has a non-zero entry in each row and it is already row-reduced.

The cardinality is clear, hence it remains to show that this attains the bound of Theorem \ref{thm1}. Plugging in $i=k-1$ in Theorem \ref{thm1} we get that the dimension of the code is less than or equal to the number of dots in the last row, which is achieved by this construction.
\end{proof}

\textbf{Construction II.}
Let $s=\sum_{i=3}^{k}i$,
$n\geq s+2+k$ and
$q^2+q+1\geq \ell$, where $\ell=n-s$ for odd $n-s$ (or $\ell= n-s-1$ for even $n-s$).
% \noindent

\emph{ Identifying vectors:}
 In addition to the identifying vector $v_{00}^k=(11\ldots1100\ldots0)$ of the lifted MRD code $\C^k_*$ (of size $q^{2(n-k)}$ and distance $2(k-1)$), the other identifying vectors of the codewords are defined as follows.
%The final code $\C^k$ will contain the codewords with identifying vectors (of weight $k$) of the form $x_0||x_1||\ldots||x_{k-3}||y$, with the part $x_i\in \F_2^{k-i}$  and the suffix $y\in \F_2^{n-s}$.
First, by Lemma~\ref{lm:1-factorization}, we partition the weight-$2$ vectors of $\F_{2}^{n-s}$ into classes $P_{1},\dots, P_{\ell}$ of size $\frac{\bar \ell}{2}$  (where $\ell=\bar \ell -1=n-s-1$ if $n-s$ even and $\ell=\bar \ell +1=n-s$ if $n-s$ odd) with pairwise disjoint positions of the ones.
%\noindent
%\begin{enumerate}
%\item
 We define the sets of identifying vectors  by a recursion. Let $v_0$ and $\cA_1,\cA_2,\cA_3\subseteq\F_q^{n-s+3}$,  as defined in Construction~0. Then $v_{00}^3=v_0$,
\[\cA_0^3= \emptyset\textmd{, } \cA_i^3=\cA_i\textmd{, } 1\leq i\leq 3.
\]
For $k\geq 4$ we define:
 \[\cA_0^k=\{v^k_{01},\dots,v^k_{0k-3}\},\]
where $v_{0j}^k=(000\; w^k_j \;|| v_{0j-1}^{k-1})$ ($1\leq j\leq k-3$),
%$v_{00}^k=(11111\dots 11 || 000\ldots 00),$\\ $v_{01}^k=(00010\dots 00 || v_{00}^{k-1}),$ $\ldots,$\\ $v_{0k-3}^k=(00010\dots00 || v_{0k-4}^{k-1} )$,
 such that the $w_j^k$ are all different weight-$1$ vectors of $\F_2^{k-3}$.
%I think you get a problem here, if the prefixes are all the same... then you don't get the Hamming distance that you want after two recursion steps...}{\color{blue} :(((((((((((((((((((((.}
%\begin{align*}
%v_{00}^k&=(11111\dots 11 || 000\ldots 00),\\
%v_{01}^k&=(00010\dots 00 || v_{00}^{k-1}),\\
%\vdots \\
%v_{0k-3}^k&=(00010\dots00 || v_{0k-4}^{k-1} ).
%\end{align*}
Furthermore we define:
\begin{align*}
\cA_1^k&=\{(0010\dots 00 || z)  : z \in \cA_1^{k-1}\},\\
\cA_2^k&=\{(0100\dots 00 || z)  : z \in \cA_2^{k-1}\},\\
\cA_{3}^k&=\{(1000\dots00 || z )  : z \in \cA_3^{k-1}\},
\end{align*}
such that the prefixes of the vectors in $\cup_{i=0}^{3}\cA_i^k$ are  vectors of $\F_2^{k}$ of weight $1$.
Note, that the suffix $y\in\F_q^{n-s}$ (from Construction~0)
in all the vectors from $\cA_1^k$ belongs to $P_1$, the suffix $y$ in all the vectors from $\cA_2^k$ belongs to $\cup_{i=2}^{\min\{q+1,\ell\}} P_i$, and the suffix $y$ in all the vectors from $\cA_3^k$ belongs to $\cup_{i=q+2}^{\ell} P_i$ (the set $\cA_3^k$ is empty if $\ell\leq q+1$).
%\item

\emph{Pending blocks:}
\begin{itemize}
 % \item All Ferrers diagrams that correspond to the vectors  in $\cA_0^k$ have $k-4$ common pending dots (in the first row). We assign different values in these $k-4$ dots for every $v_{0j}^k$, $1\leq j\leq k-3$.
%  %{\color{red} Then we assume that $q^{k-4}\geq k-3$??!!! ---> This is always true for $q>1$!!!}
  \item All Ferrers diagrams that correspond to the vectors  in $\cA_1^k$ have a common pending block with $k-3$ rows and %$\sum_{i=3}^{k-j}i$
  $ \sum_{i=3}^{k-j}i$ dots in the %$(k-j)$th
  $j$th row, for $1\leq j\leq k-3$. We fill each of these pending blocks with a different element of a suitable FDMRD code with minimum rank distance $k-3$ and size $q^3$, according to Lemma \ref{lemk-3}.
Note, that the initial conditions imply that $q^{3}\geq \bar \ell $, i.e. we always have enough fillings for the pending block to use all elements of the given $P_{i}$.
% and assign a different filling for each vector of $\cA_1^k$.
 %{\color{red} (E.g., an $i$th element of a basis for such a code, $1\leq i\leq 3$, has $1$ in the $i$th entry of each row, and all the other entries are zeros.)???? no it needs more... see Lemma before}
  \item All Ferrers diagrams that correspond to the vectors  in $\cA_2^k$ have a common pending block with $k-2$ rows and $\sum_{i=3}^{k-j}i+1$ dots in the %$(k-j)$th
$j$th row, $1\leq j\leq k-2$. Every vector which has a suffix $y$ from the same $ P_i$ will have the same value $a_i\in \F_q$ in the first entry in each row of the common pending block, s.t. the vectors with suffixes from the different classes will have different values in these entries.  (This corresponds to a FDMRD code of distance $k-2$ and size $q$.)
  %Thus, there are $q$ possible values.
  %{\color{blue} This correspond to a FDMRD code of distance $k-2$ and size $q$.
  %This gives rise to a FDMRD code with minimum distance $k-2$ and size $q$ in the pending block.
  Given the filling of the first entries of every row, all the other entries of the pending blocks are filled by a FDMRD code with minimum distance $k-3$, according to Lemma \ref{lemk-3}.
  \item All Ferrers diagrams that correspond to the vectors  in $\cA_3^k$ have a common pending block with $k-2$ rows and  $\sum_{i=3}^{k-j}i +2$ dots in the $j$th row, ${1\leq j\leq k-2}$. The filling of these pending blocks is analogous to the previous case, but for the suffixes from the different $P_{i}$-classes we fix the first two entries in each row of a pending block. Hence, there are $q^{2}$ different possibilities.
  %These correspond to a FDMRD code of minimum distance $k-2${\color{red} ????} and size $q^2$.

 %\begin{footnotesize}
%  \[\begin{array}{cccccccccccccccc}
%  \bullet & \bullet & \bullet & \bullet  & \bullet & \bullet& \bullet  &  \bullet & \bullet & \bullet & \bullet  & \bullet & \bullet & \bullet & \bullet  & \bullet\\
% \ldots & &  &  & & &  &   &  &  &   & &&&&\\
%   &&&& \bullet & \bullet & \bullet & \bullet  & \bullet & \bullet& \bullet  &  \bullet & \bullet & \bullet & \bullet  & \bullet\\
%   &&&&  & & & & &\bullet & \bullet & \bullet & \bullet  & \bullet & \bullet& \bullet\\
%      &&&&& & & & & &&    &     & \bullet & \bullet & \bullet
%  \end{array}
%\]
%\end{footnotesize}
\end{itemize}

%\item
\emph{Ferrers tableaux forms:}
On the dots corresponding to the last $n-s-2$ columns of the Ferrers diagrams for each vector $v_j$ in a given $\cA^k_i$, $0\leq i\leq 3$,  we construct a FDMRD code with minimum distance $k-1$ (according to Remark \ref{rem5}) and lift it to obtain $\C_{i,j}^k$. We define $\C_i^k=\bigcup_{j=1}^{|\cA^k_i|}\C_{i,j}^k$.
%\item

\emph{Code:}
The final code is defined as
$$\C^k=\bigcup_{i=0}^{3}\C_{i}^k\cup \C^k_*.$$
%\end{enumerate}

\begin{theorem}
\label{trm:recursive_parameters}
The code $\C^k$ obtained by Construction II has minimum distance $2(k-1)$ and cardinality $|\C^k|=q^{2(n-k)}+q^{2(n-(k+(k-1)))}+\ldots
+q^{2(n-(\sum_{i=3}^{k}i))}+\Gauss{n-(\sum_{i=3}^{k}i)}{2} $.
\end{theorem}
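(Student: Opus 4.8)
The plan is to prove both claims by induction on $k$, with Construction~0 serving as the base case $k=3$. Before the induction I would unfold the recursion defining the identifying vectors: passing from level $j-1$ to level $j$ prepends a weight-one prefix of length $j$, so the total prepended length from level $k$ down to level $4$ is $\sum_{i=4}^{k}i=s-3$, and the innermost suffix at level $3$ has length $n-(s-3)-3=n-s$ and weight $2$. This is exactly why Lemma~\ref{lm:1-factorization} is applied to the weight-two vectors of $\F_2^{n-s}$, and it pins down the length $n_j=n-\sum_{i=j+1}^{k}i$ at level $j$; in particular the lifted MRD code introduced at level $j$ has size $q^{2(n_j-j)}=q^{2(n-\sum_{i=j}^{k}i)}$, which are precisely the powers of $q$ in the claimed formula. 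A Construction~0 code on $n_3=n-s+3$ coordinates has minimum distance $4=2(3-1)$ and size $q^{2(n-s)}+\Gauss{n-s}{2}$, giving the base case and the last two terms.

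For the cardinality I would establish the recursion $|\C^k|=q^{2(n-k)}+|\C^{k-1}|$, after which the induction hypothesis yields $\sum_{j=3}^{k}q^{2(n-\sum_{i=j}^{k}i)}+\Gauss{n-s}{2}$. The term $q^{2(n-k)}$ is $|\C^k_*|$. The combinatorial core is that the level-$k$ identifying vectors other than $v_{00}^k$ are exactly the level-$(k-1)$ identifying vectors carrying one extra weight-one prefix: $\cA_0^k$ supplies $v_{00}^{k-1}$ together with $\cA_0^{k-1}$ as suffixes and $\cA_i^k$ supplies $\cA_i^{k-1}$ for $i=1,2,3$, which is a bijection between the non-MRD level-$k$ vectors and all level-$(k-1)$ vectors. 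For each such vector, Lemma~\ref{lm:gen_k-2} identifies its FDMRD dimension with the number of dots in the last two rows of its diagram; prepending a weight-one prefix only adds columns to the left together with one early pivot row, leaving the last two rows and their dots unchanged, so the per-vector code sizes agree across levels. As codewords with distinct identifying vectors are distinct and the fixed pending-block fillings add no codewords, summing over the bijection reproduces $|\C^{k-1}|$.

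For the minimum distance I would show $d_S(X,Y)\ge 2(k-1)$ for distinct $X,Y\in\C^k$ by cases. If $v(X)=v(Y)$, the codewords share their pending-block filling and differ only in the free FDMRD part of minimum rank distance $k-1$, so $d_S=2d_R\ge 2(k-1)$ by Lemma~\ref{lem:dist_lift}. If one vector is $v_{00}^k$, a direct count gives $d_H(v(X),v(Y))=2(k-1)$, whence $d_S\ge d_H\ge 2(k-1)$. If $v(X),v(Y)$ lie in different branches $\cA_i^k,\cA_j^k$ with $i\ne j$, an induction on $k$ shows the prefixes differ while the suffixes are themselves cross-branch at level $k-1$, so $d_H\ge 2+2(k-2)=2(k-1)$ and again $d_S\ge d_H$.

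The case I expect to be the main obstacle is $v(X)\ne v(Y)$ with both in the same branch $\cA_i^k$, where the distance must be extracted from $d_S\ge d_H+2\,\rk(B_X-B_Y)$ in Theorem~\ref{thm4}, and the pending-block shapes are tuned so that each sub-case reaches exactly $2(k-1)$. When the innermost suffixes lie in the same class $P_i$, their pairwise disjoint supports force $d_H=4$, while the entries filled by Lemma~\ref{lemk-3} (the whole block for $\cA_1^k$, the non-reserved part for $\cA_2^k,\cA_3^k$) give $\rk(B_X-B_Y)\ge k-3$, so $d_S\ge 4+2(k-3)=2(k-1)$. When the suffixes lie in different classes, possible only for $\cA_2^k,\cA_3^k$, $d_H$ may drop to $2$, but the class-dependent leading entries occupy the leftmost dots of the $k-2$ rows, which lie in strictly increasing columns; this staircase forces $\rk(B_X-B_Y)=k-2$ regardless of the remaining entries, so $d_S\ge 2+2(k-2)=2(k-1)$. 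Proving these two rank bounds uniformly across the three block shapes, and checking that they interlock with the Hamming floor so the total never falls below $2(k-1)$, is the heart of the argument.
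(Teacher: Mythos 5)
Your proposal follows essentially the same route as the paper's own proof: the cardinality is obtained from the recursion $|\C^k|=|\C^{k-1}|+q^{2(n-k)}$ (justified via Lemma~\ref{lm:gen_k-2} and the bijection between non-MRD level-$k$ identifying vectors and all level-$(k-1)$ ones), and the minimum distance is verified by the same case split, with the two same-branch sub-cases resolved exactly as in the paper via Theorem~\ref{thm4} as $4+2(k-3)$ and $2+2(k-2)$. The only sub-case you do not explicitly name is two distinct vectors both in $\cA_0^k$, which the paper likewise dispatches with a one-line Hamming-distance assertion, so this is not a substantive difference.
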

\begin{proof}
%The cardinality of $\C^k$ follows from the following observation:
%$|\C^k|=|\C^{k-1}|+q^{2(n-k)}$ for any $k\geq 4$.
%The minimum distance follows from Theorem~\ref{thm4}, Lemma~\ref{lem:dist_lift} and Lemma~\ref{lm:gen_k-2}.
%(See the details in~\cite{SiTr13}).
%
%%%%%%%%%%%%%%full proof
First observe that, for all identifying vectors except $v_{00}^{k}$, the additional line of dots of the corresponding Ferrers diagrams does not increase the cardinality compared to the previous recursion step, due to Lemma~\ref{lm:gen_k-2}.
The only  identifying vector that contributes additional words to $\C^{k}$ is $v_{00}^k$, and thus $|\C^k|=|\C^{k-1}|+q^{2(n-k)}$ for any $k\geq 4$. Inductively, the cardinality formula follows, together with the cardinality formula for $k=3$ from Construction 0.

Next we prove that the minimum distance of $\C^k$ is $2(k-1)$.
Let $X,Y\in \C^k$, $X\neq Y$. If $v(X)=v(Y)$ then by Lemma~\ref{lem:dist_lift} $d_S(X,Y)\geq 2(k-1)$.
\\%
Now we assume that $v(X)\neq v(Y)$. Note, that according to the definition of the identifying vectors, $d_S(X,Y)\geq d_H(v(X,v(Y))=2(k-1)$  for $(X,Y)\in \C_*^k\times \C_i^k$, $0\leq i\leq 3$, for $(X,Y)\in \C_0^k\times \C_0^k$, and for  $(X,Y)\in \C_i^k\times\C_j^k$, $i\neq j$. \\
Now let $X,Y\in \C_i^k$, for some $1\leq i\leq 3$.
\begin{itemize}
  \item If the suffixes of $v( X)$ and $v( Y)$ of length $n-s$ belong to the same class $P_t$, then $d_H(v(X), v(Y))=4$ and $d_R(B_X,B_Y)= k-3$, for the common pending blocks  submatrices $B_X, B_Y$ of $\cF(X), \cF(Y)$. Then by Theorem~\ref{thm4}, $d_S(X, Y)\geq 4+2(k-3)=2(k-1)$.
  \item If the suffixes of $v( X)$ and $v( Y)$ of length $n-s$  belong to different classes, say $P_{t_1},P_{t_2}$ respectively, then $d_H(v(X), v(Y)) \geq 2$ and $d_R(B_X,B_Y)= k-2$, for the common pending blocks  submatrices $B_X, B_Y$ of $\cF(X), \cF(Y)$. Then by Theorem~\ref{thm4}, $d_S(X, Y)\geq 2+2(k-2)=2(k-1)$.
\end{itemize}
Hence, for any $X,Y \in \C^{k}$ it holds that $d_{S}(X,Y)\geq 2(k-1)$.
\end{proof}

\begin{corollary}
Let $n\geq s+2+k$ and
$q^2+q+1\geq \ell$, where $s=\sum_{i=3}^{k}i$ and $\ell=n-s$ for odd $n-s$ (or $\ell= n-s-1$ for even $n-s$). Then
$$A_{q}(n, 2(k-1), k) \geq \sum_{j=3}^k q^{2(n-\sum_{i=j}^k i)}+\Gauss{n-(\sum_{i=3}^{k}i)}{2}.$$
\end{corollary}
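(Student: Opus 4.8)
The plan is to read the corollary off directly from Theorem~\ref{trm:recursive_parameters}. The hypotheses of the corollary — namely $n\geq s+2+k$ and $q^2+q+1\geq \ell$ with $s=\sum_{i=3}^{k}i$ and $\ell$ as specified according to the parity of $n-s$ — are exactly the initial conditions imposed on Construction II. Thus, under these hypotheses, Construction II produces a genuine code $\C^k\subseteq \Gr$, and by Theorem~\ref{trm:recursive_parameters} this $\C^k$ is an $(n,|\C^k|,2(k-1),k)_q$ constant dimension code. Since $A_{q}(n,2(k-1),k)$ is by definition the maximum size of any $(n,M,2(k-1),k)_q$ code, I immediately obtain $A_{q}(n,2(k-1),k)\geq |\C^k|$, which is the whole content of the bound once the cardinality is put into closed form.

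It then remains only to rewrite the cardinality $|\C^k|$ supplied by Theorem~\ref{trm:recursive_parameters} in the form appearing in the corollary. The generic power-of-$q$ summand is $q^{2(n-\sum_{i=j}^{k}i)}$: the first listed term $q^{2(n-k)}$ is the case $j=k$ (since $\sum_{i=k}^{k}i=k$), the second term $q^{2(n-(k+(k-1)))}$ is the case $j=k-1$, and continuing in this fashion the last power-of-$q$ term $q^{2(n-\sum_{i=3}^{k}i)}$ is the case $j=3$. Collecting these gives the sum $\sum_{j=3}^{k} q^{2(n-\sum_{i=j}^{k}i)}$, while the remaining summand $\Gauss{n-(\sum_{i=3}^{k}i)}{2}$ is precisely $\Gauss{n-s}{2}$. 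This identification is pure index bookkeeping.

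I do not expect any genuine obstacle here: all of the substantive work — establishing the minimum distance via the quasi-pending-block bound of Theorem~\ref{thm4}, and the recursive cardinality count using Lemma~\ref{lm:gen_k-2} and Lemma~\ref{lemk-3} — is already discharged inside Theorem~\ref{trm:recursive_parameters}. The only point that requires a moment of care is confirming that the corollary's hypotheses coincide with what Construction II actually needs, in particular that the condition $q^2+q+1\geq \ell$ forces $q^{3}\geq \bar\ell$, so that there are always enough distinct fillings of the pending blocks to use every element of each class $P_i$. Once this alignment is verified, the existence of $\C^k$ with the stated parameters yields the claimed lower bound, and the proof is complete.
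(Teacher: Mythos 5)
Your proposal is correct and matches the paper's (implicit) reasoning exactly: the corollary is stated without proof as an immediate consequence of Theorem~\ref{trm:recursive_parameters}, obtained by noting that Construction II yields a code with the stated distance and cardinality under precisely the corollary's hypotheses, and then rewriting the cardinality as the sum $\sum_{j=3}^k q^{2(n-\sum_{i=j}^k i)}+\Gauss{n-s}{2}$. Your index bookkeeping is accurate, so nothing is missing.
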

%\begin{remark} For small alphabets ?????
%\end{remark}

\begin{example}
Let $k=4$, $d=6$, $n=13$, and $q=2$.  The code $\C^4$ obtained by Construction II has cardinality $2^{18}+2^{12}+\Gauss{6}{2}=2^{18}+4747$ (the largest previously known code is of cardinality $2^{18}+4357$~\cite{EtSi09}).
\end{example}

\begin{example}
Let $k=5$, $d=8$, $n=19$, and $q=2$.  The code $\C^5$ obtained by Construction II has cardinality $2^{28}+2^{20}+2^{14}+\Gauss{7}{2}=2^{28}+1067627$ (the largest previously known code is of cardinality $2^{28}+1052778$~\cite{EtSi09}).
We illustrate now the construction.

First, we partition the set of suffixes $y\in F_2^7$ of weight $2$ into $7$ classes, $P_1,\ldots,P_7$ of size $3$ each. The identifying vectors of the code are partitioned as follows:
\begin{align*}
v_{00}^5=&(11111||0000||000||0000000),\\
\cA_0^5=\{%&(11111||0000||000||0000000),\\
&(00001||1111||000||0000000),\\
&(00010||0001||111||0000000)\}\\
\cA_1^5=\{&(00100||0010||001||y):y\in P_1\}\\
\cA_2^5=\{&(01000||0100||010||y):y\in \{P_2,P_3\}\}\\
\cA_3^5=\{&(10000||1000||100||y):y\in \{P_4,P_5,P_6,P_7\}\}
\end{align*}

To demonstrate the idea of the construction we will consider only the set $\cA_2^5$. All the codewords corresponding to $\cA_2^5$ have the following common pending block $B$:
 \begin{footnotesize}
  \[\begin{array}{cccccccc}
  \bullet & \bullet & \bullet & \bullet  & \bullet & \bullet& \bullet  &  \bullet\\
   &&&& \bullet & \bullet & \bullet & \bullet \\
   &&& &&&& \bullet
        \end{array}
\]
\end{footnotesize}
If the suffix $y\in P_2$, or $y\in P_3$ then to distinguish between these two classes we assign the following values to $B$, respectively:
 \begin{footnotesize}
  \[\begin{array}{cccccccc}
  1 & \bullet & \bullet & \bullet  & \bullet & \bullet& \bullet  &  \bullet\\
   &&&& 1 & \bullet & \bullet & \bullet \\
   &&& &&&& 1
        \end{array},\textmd{ or }
        \begin{array}{cccccccc}
  0 & \bullet & \bullet & \bullet  & \bullet & \bullet& \bullet  &  \bullet\\
   &&&& 0 & \bullet & \bullet & \bullet \\
   &&& &&&& 0
        \end{array}
\]
\end{footnotesize}
For all $3$ identifying vectors with the suffixes $y$ from $P_i$, $i=2,3$, we
%assign the following values in the remaining dots of $B$, which correspond to
construct a FDMRD code of distance $2$ for the remaining dots of $B$ (here, $a=0$ or $a=1$):
\[\begin{array}{cccccccc}
  $a$ & 0 & 0 & 0  & 0 & 0& 0  & 0\\
   &&&& $a$ &0 & 0 & 0 \\
   &&& &&&& $a$
        \end{array},
        \begin{array}{cccccccc}
   $a$ & 1 & 0 & 0  & 0 & 0& 0  & 0\\
   &&&& $a$ &1 & 0 & 0 \\
   &&& &&&& $a$
        \end{array},
        \]
        \[
        \begin{array}{cccccccc}
   $a$ & 0 & 1 & 0  & 0 & 0& 0  & 0\\
   &&&& $a$ &0 & 1 & 0 \\
   &&& &&&& $a$
        \end{array},
        \begin{array}{cccccccc}
   $a$ & 0 & 0 & 1  & 0 & 0& 0  & 0\\
   &&&& $a$ &0 & 0 & 1 \\
   &&& &&&& $a$
        \end{array},
        \]
\[\begin{array}{cccccccc}
  $a$ & 1 & 1 & 0  & 0 & 0& 0  & 0\\
   &&&& $a$ &1 & 1 & 0 \\
   &&& &&&& $a$
        \end{array},
        \begin{array}{cccccccc}
   $a$ & 1 & 0 & 1  & 0 & 0& 0  & 0\\
   &&&& $a$ &1 & 0 & 1 \\
   &&& &&&& $a$
        \end{array},
        \]
        \[
        \begin{array}{cccccccc}
   $a$ & 0 & 1 & 1  & 0 & 0& 0  & 0\\
   &&&& $a$ &0 & 1 & 1 \\
   &&& &&&& $a$
        \end{array},
        \begin{array}{cccccccc}
   $a$ & 1 & 1 & 1  & 0 & 0& 0  & 0\\
   &&&& $a$ &1 & 1 & 1 \\
   &&& &&&& $a$
        \end{array}.
        \]
      Since $P_{i}$ contains only three elements, we only need to use three of the above tableaux.
 We proceed analogously for the pending blocks of $\cA_{1}^{5}, \cA_{3}^{5}$. Then we fill the Ferrers diagrams corresponding to the last $7$ columns of the identifying vectors with an FDMRD code of minimum rank distance $4$ and lift these elements. Moreover, we add the lifted MRD code corresponding to $v_{00}^{5}$, which has cardinality $2^{28}$. The number of codewords which corresponds to the set $\cA_0^5$  is $2^{20}+2^{14}$. The number of codewords that correspond to $\cA_1^5\cup\cA_2^5\cup\cA_3^5$ is $\Gauss{7}{2}$.

\end{example}

\section{Conclusion}

In this work we presented new constructions (based on the ideas of \cite{EtSi09,EtSi12,TrRo10}) of constant dimension codes in $\Gr$ with minimum subspace distance $4$ or $2k-2$, respectively. These constructions give rise to lower bounds on the cardinality of such codes, which are tighter than other known bounds for general parameters. On the other hand there exist some parameter sets where we know better constructions, hence these bounds are not tight in general.
Then again, we show some examples where our constructions come up with the largest codes known so far for the given parameters.

For future work one can try to apply the ideas of this paper to constant dimension codes with other minimum subspace distance than $4$ or $2k-2$ to come up with better codes than known so far.

%% Appendix:
%% If needed a single appendix is created by
%\appendix
%% If several appendices are needed, then the command
%\appendices
%% in combination with further \section-commands can be used.

%%% Use \section* for acknowledgement
\section*{Acknowledgment}
The second author war partially supported by Swiss National Science Foundation grant no. 138080 and Forschungs-kredit of the University of Zurich, grant no. 57104103.

%% References:
%% We recommend the usage of BibTeX:
%%
%\bibliographystyle{IEEEtran}
%\bibliography{definitions,bibliofile}
%%
%% where we here have assume the existence of the files
%% definitions.bib and bibliofile.bib.
%% BibTeX documentation can be obtained at:
%% http://www.ctan.org/tex-archive/biblio/bibtex/contrib/doc/
%%
%%
%%
%% Or manual references (pay attention to consistency!):

\end{document}